\def\CC{{\mathbb C}}
\def\A{{\mathcal A}}
\def\B{{\mathcal B}}
\def\C{{\mathcal C}}
\def\D{{\mathcal D}}
\def\E{{\mathcal E}}
\def\F{{\mathcal F}}
\def\Z{{\mathcal Z}}
\newtheorem{theorem}{Theorem}[section]
\newtheorem{corollary}[theorem]{Corollary}
\newtheorem{proposition}[theorem]{Proposition}
\newtheorem{lemma}[theorem]{Lemma}
\begin{document}

\title{On quasi-orthogonal systems of matrix algebras}

\author{Mih\'aly Weiner\fnref{fn}}
\ead{mweiner@renyi.hu}
\address{
University of Rome ``Tor Vergata'', Department of Mathematics\\
via della Ricerca Scientifica 1, 00133 Rome, Italy \\
(on leave from:
Alfr\'ed R\'enyi Institute of Mathematics\\
H-1364 Budapest, POB 127, Hungary)}
\fntext[fn]{Supported in part by the ERC Advanced Grant 227458 OACFT 
``Operator Algebras and Conformal Field Theory''.}

\begin{abstract}

In this work it is shown that certain interesting types of quasi-orthogonal
system of subalgebras (whose existence cannot be ruled out by 
the trivial necessary conditions) cannot exist. In particular, it is proved
that there is no quasi-orthogonal decomposition of  
$M_n(\CC)\otimes M_n(\CC)\equiv M_{n^2}(\CC)$
into a number of maximal abelian subalgebras and
factors isomorphic to $M_n(\CC)$ in which the number of
factors would be $1$ or $3$.

In addition, some new tools are introduced, too: for example,
a quantity $c(\A,\B)$,  which measures ``how close'' the
subalgebras $\A,\B \subset M_n(\CC)$ are to being quasi-orthogonal. 
It is shown that in the main cases of interest, $c(\A',\B')$ --- where 
$\A'$ and $\B'$ are the commutants 
of $\A$ and $\B$, respectively --- can be determined by 
$c(\A,\B)$ and the dimensions of $\A$ and $\B$. 
The corresponding formula is used to find some further
obstructions regarding quasi-orthogonal systems.
\end{abstract}

\begin{keyword}
quasi-orthogonality \sep complementary subalgebras

\MSC 15A30, 47L05
\end{keyword}

\maketitle

\section{Introduction}

Matrix-algebraic questions have often their roots in quantum information theory.
Mutually unbiased bases (MUB) are considered and investigated because of their relation for example to quantum state tomography \cite{wooters} or quantum cryptography \cite{security}.

A collection of MUB can be viewed as  a particular example of a quasi-orthogonal 
system of  subalgebras of $M_n(\CC)$ (in this work by {\it subalgebra} we shall always mean 
a $^*$-subalgebra containing $\mathbbm 1\in M_n(\CC)$; for definition and details on quasi-orthogonality see the next section). In algebraic terms, it is a quasi-orthogonal system of 
{\it maximal abelian} subalgebras (MASAs). 

Recently research has began in the non-commutative direction \cite{petz,petzkhan,opsz,ohno,pszw}, too. 
(Note that in some of these articles instead of  ``quasi-orthogonal'' the term ``complementary'' is used.)
Indeed, it should not be the commutativity of subalgebras deciding wether something 
deserves to be studied or not.  
From the point of view of quantum physics, the interesting quasi-orthogonal systems and 
decompositions are those that 
contain factors and MASAs only. (Factors are related 
to subsystems and MASAs are 
related to maximal precision measurements.) 

An example for a quantum physics motivated quasi-orthogonal system
which is composed of both abelian and non-abelian algebras  
is  the collection of following $3$ subalgebras
of $M_2(\CC)\otimes M_2(\CC)\equiv M_4(\CC)$ 
(i.e.\! the algebra of $2$ quantum bits):
$M_2(\CC)\otimes \mathbbm 1$ (the algebra associated
to the first qbit), $\mathbbm 1\otimes M_2(\CC)$ (the algebra associated
to the second qbit), and the maximal abelian subalgebra associated to
the so-called {\it Bell-basis} (which plays an important role 
e.g.\! in the protocol of {\it dense-coding}). 

Existential and constructional questions are already difficult in the
abelian case. We know many things when  the dimension is a power of 
a prime \cite{ivanovic,woofie}, but for example it is still a question, 
whether in $6$ dimensions there exists a collection of $7$ 
MUB or not \cite{h6,matemisi}.

Little is known when not all subalgebras are assumed to be maximal abelian. 
What are the existing constructions 
and established  obstructions (that is, reasons preventing the existence of 
certain such systems)? Of course there are some trivial necessary conditions (that will be 
discussed later).  Considering systems containing not only factors and MASAs, 
it is easy to see, that in general these conditions, alone,
cannot be also sufficient (see the example given in section \ref{sec:pre:decomp}). 
However, up to the knowledge of the author, previous to 
this work, nontrivial obstructions regarding ``interesting'' systems
were only found in very small dimensions 
(namely in dimension $4$, see \cite{petzkhan, opsz, pszw}), using --- in part --- some 
rather explicit calculations. Moreover, existing constructions such as the ones in
\cite{opsz,ohno} are usually carried out in prime-power dimensions, only. 
Thus there is a wide gap between constructions and obstructions where 
``anything could happen''.

The aim of this work is to shorten this gap. In particular, 
we shall exclude the existence of some interesting systems 
(and moreover, we shall do so not only in some low dimensions).

This paper is organized as follows. First, ---
partly for reasons of self-containment, partly for fixing notations --- a quick overview 
(including a presentation of the known results) is given about 
quasi-orthogonality, quasi-orthogonal systems 
and quasi-orthogonal decompositions. Though it is
well-known to experts, certain parts 
--- at least, up to the knowledge of the author --- have never been 
collected together. In particular, $3$ conditions will be singled out and 
listed as ``trivial necessary conditions'' of existence for a system.

Then in section \ref{sec:M_n} we consider decompositions of $M_n(\CC)\otimes M_n(\CC) = 
M_{n^2}(\CC)$. The tensorial product $M_n(\CC)\otimes M_n(\CC) \equiv M_{n^2}(\CC)$ 
appears in quantum physics when one deals with a bipartite system composed 
of two equivalent parts. Of course 
$M_{n^2}(\CC)$ has many subfactors isomorphic to $M_n(\CC)$ --- in physics 
such a subfactor may stand for a subsystem; for example $M_n(\CC)\otimes 
\mathbbm 1$ stands for the first part of the bipartite system. It seems 
therefore a natural question to investigate quasi-orthogonal 
decompositions of $M_{n^2}(\CC)$ into subfactors isomorphic to 
$M_n(\CC)$ and a number of MASAs. (As was mentioned,  
MASAs are related to maximal precision measurements.) 
 We shall show that there is no such decomposition in 
which there would be only $1$ factor (with the other algebras being 
maximal abelian) and neither there are decompositions with $3$ factors.
(Note that with $2$ factors there are decompositions, see 
\cite[Theorem 6]{pszw}, for example.) As far as
the author knows, this is the first example\footnote{In reality ---
though in a somewhat implicit manner --- another work \cite{lmsweiner} of the 
present author has already dealt with the case of a single factor; see the 
remark after corollary \ref{1factor}.
However, the non-existence of this kind of decomposition was never stated there ---
that paper had a different aim.}
for excluding the existence of  some ``interesting'' quasi-orthogonal systems 
(whose existence cannot be ruled out by the trivial necessary conditions) 
in an infinite sequence of higher and higher dimensions.

We shall deal with these cases using a recent result
\cite{ohnopetz}, by which if we replace 
each subalgebra in such a decomposition with its commutant,  we again get 
a quasi-orthogonal decomposition. 
However, this is something rather
particular:  in general, if two subalgebras are quasi-orthogonal, their
commutants will not remain so.  
To study the relation of the commutants,
in section \ref{sec:formula} for two subalgebras $\A,\B \subset M_n(\CC)$ 
we shall take the corresponding 
trace-preserving expectations $E_\A,E_\B$ and consider the quantity 
\begin{equation} 
c(\A,\B) : = {\rm Tr}(E_\A E_\B) 
\end{equation} 
where $E_\A E_\B$ is viewed as an $M_n(\CC) \rightarrow M_n(\CC)$ linear 
map (and hence its trace is well-defined). Then $c(\A,\B)\geq 1$ and 
equality holds if and only if $\A$ and $\B$ 
are quasi-orthogonal. 
Thus $c(\A,\B)$ measures how much $\A$ and $\B$ are (or: how 
much they are {\it not}) quasi-orthogonal. We shall prove, 
that if $\A$ and $\B$ satisfy a certain homogenity condition (which is 
always satisfied, if they are 
factors or maximal abelian subalgebras) then 
\begin{equation}
c(\A',\B') = \frac{n^2}{{\rm dim}(\A) {\rm dim}(\B)} c(\A,\B).
\end{equation}
Finally, in the last section we shall  show in some concrete examples how the derived 
formula can be used to generalize our earlier arguments and thus to exclude the existence of some further quasi-orthogonal systems. 
In some sense our examples will fall ``close" to the cases dealt 
with in section \ref{sec:M_n}.  However, in contrast to those cases, here the commutants will not
remain (exactly) quasi-orthogonal; so instead of  ``exact" statements we shall rely on 
our quantitative formula.

\section{Preliminaries}
\label{sec:preliminaries}

\subsection{Quasi-orthogonality}

There is a natural scalar product on $M_n(\CC)$ (the so-called
{\it Hilbert-Schmidt} scalar product) defined by the formula
\begin{equation}
\langle A,B\rangle = {\rm Tr}(A^*B)\;\;\;\;\;\;\;\;\; (A,B\in M_n(\CC)).
\end{equation}
Thus if $\A\subset M_n(\CC)$ is a linear subspace, it is 
meaningful to consider the ortho-projection $E_\A$ onto $\A$. When $\A$ 
is actually a $*$-subalgebra containing $\mathbbm 1 \in M_n(\CC)$ (or in 
short: a subalgebra), $E_\A$ coincides with the so-called 
{\it trace-preserving conditional expectation} onto $\A$.

Two subalgebras $\A,\B\subset M_n(\CC)$, as linear subspaces, cannot be 
orthogonal, since $\A\cap \B\neq \{0\}$ as $\mathbbm 1\in \A\cap \B$. At 
most, the subspaces $\A\cap \{\mathbbm 1\}^\perp$ and $\B\cap \{\mathbbm
1\}^\perp$ can be orthogonal, in which case we say that $\A$ and $\B$ 
are {\bf quasi-orthogonal}. 

Note also that $A\in M_n(\CC)$ is orthogonal to $\mathbbm 1$ if and only 
if ${\rm Tr}(A)=0$ and so the subspace $\A\cap \{\mathbbm 1\}^\perp$ is 
simply the ``traceless part'' of $\A$. In other words, $\A$ and $\B$ are 
quasi-orthogonal if and only if their traceless parts are orthogonal.

For an $X\in M_n(\CC)$ denote its traceless part by $X_0$; that is,
\begin{equation}
X_0=X-\tau(X)\mathbbm 1
\end{equation}
where $\tau = \frac{1}{n}{\rm Tr}$ is the {\bf normalized trace}. (Note 
that the normalization is done in such a way that $\tau(\mathbbm 1)=1$.) 
Then the traceless parts $A_0,B_0$ of $A,B\in M_n(\CC)$ are orthogonal
if and only if 
\begin{equation}
0=\tau(A_0^*B_0)=\tau((A^*-\overline{\tau(A)}\mathbbm 
1)(B-\tau(B)\mathbbm 1) = \tau(A^*B)-\overline{\tau(A)}\tau(B),
\end{equation}
that is, if and only if $\tau(A^*B)=\tau(A^*)\tau(B)$. So, since if 
$A$ is an element of the subalgebra, then so is $A^*$, we have that
two subalgebras $\A,\B$ of $M_n(\CC)$ are quasi-orthogonal if and only if 
for all $A\in \A$ and $B\in \B$,
\begin{equation}\label{product-trace}
\tau(A B) = \tau(A)\tau(B).
\end{equation}

\subsection{Factors, abelian subalgebras and MUB}
For any subalgebra $\A\subset M_n(\CC)$ one can consider its {\bf 
commutant}
\begin{equation}
\A'\equiv \{X\in M_n(\CC)| \,\forall A\in \A: \, AX-XA = 0\}
\end{equation}
which is again a subalgebra. One has that the {\bf second commutant}
$\A''\equiv (\A')'=\A$. A subalgebra $\A$ whose center 
\begin{equation}
\Z(\A)=\A\cap \A'
\end{equation}
is trivial (i.e.\! such that $\Z(\A)=\CC\mathbbm 1$) is called a {\bf 
factor}. If $\A\subset M_n(\CC)$ is a factor, then there exist $j,k$ natural 
numbers such that $jk=n$, and that up to unitary equivalence, $\A$ is of 
the form
\begin{equation}
\A=M_j(\CC)\otimes \mathbbm 1 \equiv \{ A\otimes \mathbbm 1 | A\in 
M_j(\CC)\} \subset M_j(\CC)\otimes M_k(\CC)\equiv M_{jk}(\CC).
\end{equation}
Then $\A'=\mathbbm 1 \otimes M_k(\CC)$ and so if $\A$ is factor, then 
$\A$ and $\A'$ are always quasi-orthogonal; this follows easily from the 
trace-criterion (\ref{product-trace}) and the fact that
\begin{equation}
{\rm Tr}(A\otimes B) = {\rm Tr}(A) {\rm Tr}(B)
\end{equation}
for all $A\in M_j(\CC)$ and $B\in M_k(\CC)$.

Another example of quasi-orthogonal subalgebras comes from mutually 
unbiased bases. Two orthonormed bases $\E=(\mathbf e_1, \ldots, \mathbf
e_n)$ and $\F=(\mathbf f_1, \ldots, \mathbf f_n)$ in $\CC^n$ such that
\begin{equation}
|\langle \mathbf e_k,\mathbf f_j \rangle| = {\rm constant} =
\frac{1}{\sqrt{n}}
\end{equation}
for all $k,j=1,\ldots, n$, are said to be {\bf mutually unbiased}, or in 
short, $\E$ and $\F$ is a pair of MUB. 

Clearly, unbiasedness does not depend on the order of vectors in $\E$ 
and $\F$, nor on their ``phase factors''. (That 
is, the MUB property is not disturbed by replacing a basis vector $\mathbf 
v$ by $\lambda \mathbf v$, where $\lambda\in \CC, |\lambda|=1$.) For 
this reason, one often associates subalgebras to 
these bases (which do not depend on the order of vectors and their 
phases) and then works with them rather than with the actual bases.

Let us see how can we assign a subalgebra to an orthonormed basis $\E$.
For a vector $\mathbf v\neq 0$, denote the ortho-projection onto the 
one-dimensional subspace $\CC\mathbf v$ by $P_{\mathbf v}$. Then the 
linear subspace of $M_n(\CC)$
\begin{equation} 
\A_\E \equiv {\rm Span}\{\, P_{\mathbf e_j}\, |j=1,\ldots, n\} 
\end{equation} 
is actually a subalgebra. Infact it is a {\bf maximal abelian subalgebra} 
(in short: a MASA), and every MASA of $M_n(\CC)$ is of this form. 

Elementary calculation shows that if $\mathbf v, \mathbf w$ are vectors of 
unit length then 
\begin{equation}
{\rm Tr}(P_{\mathbf v}P_{\mathbf w}) = 
|\langle \mathbf v,\mathbf w\rangle |^2. 
\end{equation}
Hence by an application of the trace-criterion (\ref{product-trace}) one 
has that $\E$ and $\F$ is a pair of MUB if and only if the associated 
maximal abelian subalgebras $\A_\E$ and $\A_\F$ are quasi-orthogonal.

A famous question concerning MUB is: how many orthonormed 
bases can be given in $n$ dimensions in such a way that any two of 
the given collection is a MUB?  There is a simple bound concerning this 
maximum number --- which we shall 
denote by $N(n)$ --- namely, that if $n>1$ then $N(n)\leq n+1$. Let us 
recall now how this bound can be obtained by a use of the above 
introduced subalgebras.

The traceless part of $M_n(\CC)$ is $n^2-1$ dimensional, whereas the 
traceless part of a maximal abelian subalgebra of $M_n(\CC)$ is
$n-1$ dimensional. If $n>1$, then at most 
\begin{equation}
\frac{n^2-1}{n-1}=n+1
\end{equation}
$n-1$-dimensional orthogonal subspaces can be fitted in an $n^2-1$ 
dimensional space, implying that for $n>1$ we have $N(n)\leq n+1$.

It is known by construction \cite{ivanovic,woofie} that if $n$ is a power of a prime, 
then $N(n) = n+1$. However, apart from 
$n=p^k$ (where $p$ is a prime), there is no other dimension $n>1$ in 
which the value of $N(n)$ would be known. In particular, 
already the value of $N(6)$ is an open question with a long literature on 
its own; see e.g.\! \cite{h6,matemisi} . All we know is that $3\leq 
N(6)\leq 7$ with numerical evidence \cite{numerical} indicating 
that $N(6)$ is actually $3$.

\subsection{Quasi-orthogonal systems and decompositions}
\label{sec:pre:decomp}

A collection of pairwise quasi-orthogonal subalgebras 
$\A_1,\A_2,\ldots$ of $M_n(\CC)$ is said to form a {\bf 
quasi-orthogonal system} in $M_n(\CC)$. 
If in addition the given subalgebras linearly span the full space 
$M_n(\CC)$, we say that the collection is a {\bf quasi-orthogonal 
decomposition} of $M_n(\CC)$.

Suppose we are looking for a quasi-orthogonal system in
$M_n(\CC)$ (or quasi-orthogonal {\it decomposition} of $M_n(\CC)$) 
$\A_1,\ldots, \A_k$ such that $\A_j$ is isomorphic to $\B_j\; 
(j=1,\ldots,k)$, where $\B_1,\ldots,\B_k$ are given matrix 
algebras. For example, we may look for a quasi-orthogonal system in which 
each algebra is a MASA --- as is the case when we want to find a 
collection of MUB --- or, motivated by the study of ``quantum bits'' we may look for a system consisting of subalgebras all isomorphic to $M_2(\CC)$ --- as is investigated in 
\cite{petzkhan,opsz}. 

What can we say about the existence of a specific system? Some necessary 
conditions are easy to establish. In particular, the following three 
will be referred as the ``trivial neccessary conditions'' 
for the existence of a specific quasi-orthogonal system in $M_n(\CC)$ (or: 
quasi-orthogonal decomposition of $M_n(\CC)$).

\begin{itemize}
\item[(1)] $M_n(\CC)$ must contain {\it some} subalgebras 
$\A_1,\ldots, \A_k$ isomorphic 
to the given algebras $\B_1,\ldots, \B_k$, respectively,
\item[(2)] the product ${\rm dim}(\B_i) {\rm dim}(\B_j)\leq n^2$ 
for all $1\leq i<j \leq k$,
\item[(3)] $\sum_{j=1}^k ({\rm dim}(\B_j)-1) \leq n^2-1$ and a 
corresponding quasi-orthogonal system is a quasi-orthogonal {\it 
decomposition} if and only if in the above formula equality holds. 
\end{itemize}

The first condition does not require too much explanation. Nevertheless, 
it rules out the existence of various quasi-orthogonal systems. For 
example, can we have a quasi-orthogonal system in $M_5(\CC)$ consisting of 
$3$ subalgbebras each of which is isomorphic to $M_2(\CC)$? Clearly no: 
simply, $M_5(\CC)$ does not contain any subalgebra that would be 
isomorphic to $M_2(\CC)$ since $2$ does not divide $5$.

The second condition, at first sight, is perhaps less evident; 
let us see now why is it necessary. Suppose $\A$ and $\B$ are 
quasi-orthogonal subalgebras of $M_n(\CC)$. Let $A_1,\ldots A_{d_\A}$ 
and $B_1,\ldots, B_{d_\B}$ be orthonormed bases in $\A$ and $\B$
(with $d_\A,d_\B$ standing for the dimensions of $\A$ and $\B$), 
respectively. Then, by definition of the (Hilbert-Schmidt) scalar 
product and by the trace property (\ref{product-trace}) we have that
\begin{eqnarray}\label{product-basis}
\nonumber
n \langle A_iB_j, A_{i'}B_{j'}\rangle &=& 
n^2 \tau( (A_iB_j)^* A_{i'}B_{j'})=
n^2 \tau(A_i^*A_{i'}B_{j'}B_j^*) 
= n^2 \tau(A_i^*A_{i'})\tau(B_{j'}B_j^*) \\
&=&
n^2 \tau(A_i^*A_{i'})\tau(B_j^*B_{j'})=
\langle A_i, A_{i'}\rangle\, 
\langle B_j, B_{j'}\rangle, 
\end{eqnarray}
showing that $\sqrt{n}A_iB_j\; (i=1,\ldots,d_\A; \, j=1,\ldots,d_\B)$ is an 
orthonormed system in $M_n(\CC)$. Hence the number of members in this system 
must be less or equal than the dimension of the full space $M_n(\CC)$; that is,
$d_\A d_\B\leq n^2$.

The third condition is necessary simply because if $\A_1,\ldots, \A_k$ are
quasi-orthogonal, then their traceless parts are orthogonal 
subspaces in the traceless part of $M_n(\CC)$. We argue exactly 
like we did at discussing the maximum number of MUB
(which, for us, is just a particular case): we have $k$ orthogonal 
subspaces of dimensions ${\rm dim}(\A_j)-1\; (j=1,\ldots,k)$ in a
${\rm dim}(M_n(\CC))-1=n^2-1$ dimensional space, implying the claimed 
inequality. Moreover, the subspaces span the full space (i.e.\! 
we have a quasi-orthogonal {\it decomposition}) if and only if the
dimensions add up exactly to $n^2-1$.

So these conditions are necessary for existence. But are they also sufficient?
The answer, in general, is not.
\bigskip 

\noindent
{\it Example.}
Can we find a quasi-orthogonal  system in $M_{2n}(\CC)$ consisting of 
an abelian subalgebra $\A$ of 
dimension $n+1$ and a factor $\B$ isomorphic to $M_n(\CC)$? 
If $n>2$, the answer is: not. Indeed, assume by contradiction that $\A,\B$
is such a pair. Let $P_1,\ldots ,P_{n+1}$ be the minimal projections of $\A$.
Since we are in a $2n$-dimensional space, at least one of them is a projection onto 
a one-dimensional space. So suppose $P_k$ is the orthogonal projection onto the subspace 
generated by the unit-length vector $x$. Then by the trace property (implied by quasi-orthogonality)
\begin{equation}
\langle x, Bx\rangle = {\rm Tr}(P_k B) = \frac{1}{2n} {\rm Tr}(P_k){\rm Tr}(B) = \frac{1}{2n}{\rm Tr}(B)
\end{equation}
for all $B\in \B$. This shows that the linear map $B\mapsto Bx$ is injective
on $\B$. Indeed, if $Bx=0$ then $0=\|Bx\|^2= \langle Bx,Bx\rangle =
\langle x, B^*Bx\rangle$, which by the above equation would mean that
${\rm Tr}(B^*B)=0$, implying that $B=0$. However, this is a contradiction, as
the dimension of $\B$ is bigger than the dimension of the full 
space: $n^2>2n$. Yet the listed necessary conditions would allow the existence of
such a system. Indeed, the first condition is trivially satisfied, the second is satisfied 
as ${\rm dim}(\A) {\rm dim}(\B) = (n+1)n^2< (2n)^2$, whereas the third is satisfied
since $({\rm dim}(A)-1)+({\rm dim}(\B)-1) = n+n^2-1 < (2n)^2-1$.
\smallskip

 \noindent
Since our motivation is quantum information theory, we are mainly interested 
by quasi-orthogonal systems formed by maximal abelian subalgebras and factors. 
For such systems it is somewhat more difficult to show that the trivial necessary 
conditions are not also sufficient. Let us continue now by discussing 
the known examples of  ``interesing'' systems.

The trivial necessary conditions allow the existence of a 
quasi-orthogonal system in $M_n(\CC)$ composed of $k$ MASAs 
as long as $k\leq n+1$ (see the third condition). Moreover,  
a quasi-orthogonal system composed of exactly $n+1$ MASAs
would give a quasi-orthogonal decomposition of $M_n(\CC)$. 
As was mentioned, the existence of such systems is a popular research 
theme (though the problem is usualy considered rather in terms of 
MUB than MASA), and little is known when $n$ is not a power of a prime.

Another, more recent problem is to find a system of 
quasi-orthogonal subalgebras in $M_{2^k}(\CC)$
in which all subalgebras are isomorphic to $M_2(\CC)$. Here there 
is a more direct motivation: a quantum bit, in some sense, is 
a subalgebra isomorphic to $M_2(\CC)$, whereas the full algebra
$M_{2^k}(\CC)\simeq
M_2(\CC)\otimes M_2(\CC)\otimes\ldots $ is used in the description of
the register of a quantum computer containing $k$ quantum bits.

In this case too, the first two trivial necessary conditions are 
automatically satisfied, whereas the third one says that 
such a system can consists of at most $(2^{2k}-1) / 3 =:S(k)$ 
subalgebras. Again, exactly $S(k)$ such subalgebras would give a 
quasi-orthogonal decomposition. (It is easy to see that $S(k)$ 
is an integer.) In \cite{opsz} $S(k)-1$ such subalgebras are 
presented by a construction using induction on $k$. For
$k>2$ it is not known whether the construction is {\it optimal}; 
that is, whether the upper bound $S(k)$ could be realized or 
not. However, it is proved \cite{petzkhan} that for $k=2$ --- i.e.\! in 
$M_4(\CC)$ --- the construction is indeed optimal: there is no 
quasi-orthogonal system consisting of $S(2)=5$ subalgebras 
isomorphic to $M_2(\CC)$. This shows that the listed trivial 
necessary conditions, even in the special case of our interest, are not always 
sufficient, too. (As far as the author of this 
work knows, this was the first example of an ``interesting'' quasi-orthogonal 
system satisfying the trivial conditions, whose existence was 
{\it disproved}.)

The case of $M_4(\CC)$ has received quite a bit of attention 
\cite{petzkhan,opsz,pszw}. Indeed, this is the smallest dimension in which 
--- at least from our point of view --- something nontrivial 
is happening. As we are interested by factors and MASAs, 
let us consider a quasi-orthogonal decomposition
of $M_4(\CC)$ consisting of a collection of MASAs
(so subalgebras isomorphic to $\CC^4$) and proper subfactors (so 
subalgebras isomorphic to $M_2(\CC)$). Again, the 
first two trivial necessary conditions are automatically satisfied, 
whereas dimension counting (third condition) says that for such a 
decomposition we need $5$ subalgebras. The trivial necessary conditions
do not give anything more. However, in \cite{pszw} it was proved that such 
a decomposition exists if and only if an even number of these $5$ 
subalgebras are factors. So for example one can construct such a 
decomposition with $3$ factors and $2$ MASAs, but 
not with $2$ factors and $3$ MASAs. This again shows that 
the trivial necessary conditions are not always sufficient, too.

The problem with quasi-orthogonal copies of $M_2(\CC)$ in $M_{2^k}(\CC)$
can be also generalized in the sense that one may look for 
quasi-orthogonal copies of $M_n(\CC)$ in $M_{n^k}(\CC)$. If $n$ is 
a power of a non-even prime, then $M_{n^k}(\CC)$ admits a 
quasi-orthogonal decomposition into subalgebras isomorphic to 
$M_n(\CC)$. (As was mentioned, the same does {\it not} hold for $n=k=2$.)
The proof is constructional and relies on the existence of finite fields 
and in some sense it is carried out in a similar manner to 
the construction of $n+1$ MUB in dimension $n=p^\alpha$ (where $p>2$ is a 
prime and $\alpha$ is a natural number).

\section{Decompositions of $M_n(\CC)\otimes 
M_n(\CC) \equiv M_{n^2}(\CC)$}
\label{sec:M_n}

We shall now consider quasi-orthogonal 
decompositions of $M_{n^2}(\CC)$ into subfactors isomorphic to 
$M_n(\CC)$ and a number of MASAs. 
Such decompositions of $M_2(\CC)\otimes M_2(\CC)\equiv 
M_4(\CC)$ are well studied in \cite{pszw}. However, there the achieved 
results relay on explicit calculations carried out in $4$ dimensions. What 
can we do in higher dimensions?
Note that the trivial necessary conditions do not rule out the 
existence of a decomposition of the mentioned type;
 all they say that such decompositions must consists of
\begin{equation}
\frac{n^4-1}{n^2-1} = n^2+1
\end{equation} 
subalgebras (since both a maximal abelian subalgebra of $M_{n^2}(\CC)$ 
and the factor $M_n(\CC)$ is $n^2$-dimensional).

Decomposition into MASAs is of course interesting, 
but it is known to be a hard question which is usualy studied in terms of 
MUB and it is out of the scope of this article. Actually, there is a certain mathematical (or 
more precisely: operator algebraic) advantage of having not only MASAs: it is often 
helpful to consider the commutant of a subalgebra. (The commutant 
of a MASA is itself, so it does not give anything ``new''.) 
Infact, the result in \cite{petzkhan} concerning quasi-orthogonal 
copies of $M_2(\CC)$ in $M_4(\CC)$ is achieved exactly by considering 
commutants.

In \cite{ohnopetz} an important result is deduced about the quasi-orthogonality 
of commutants. We shall 
now recall this result (stating it in a sightly different form).

\begin{lemma}
Let $\A_1$ and $\A_2$ be quasi-orthogonal subalgebras of $M_n(\CC)$. Then 
the commutants $\A_1'$ and $\A_2'$ are quasi-orthogonal if and only if 
${\rm dim}(\A_1)\, {\rm dim}(A_2)=n^2$.
\end{lemma}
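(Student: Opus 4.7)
The plan is to derive an exact formula for the quantity $c(\A_1', \A_2') := {\rm Tr}(E_{\A_1'} E_{\A_2'})$ under the quasi-orthogonality of $\A_1, \A_2$, and then deduce the stated ``if and only if'' by observing that $\A_1', \A_2'$ are quasi-orthogonal precisely when $c(\A_1', \A_2') = 1$. The key technical device is the integral representation of the conditional expectation onto a commutant: for any subalgebra $\A \subseteq M_n(\CC)$ containing $\mathbbm 1$,
\[
E_{\A'}(X) = \int_{U(\A)} V X V^* \, dV
\]
(Haar measure on the unitary group $U(\A)$). This follows from the uniqueness of the trace-preserving conditional expectation: the right-hand side lands in $\A'$, preserves the trace, and fixes $\A'$ pointwise.

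Substituting this into $E_{\A_1'} E_{\A_2'}$ and using the identity ${\rm Tr}({\rm Ad}(U)) = |{\rm Tr}(U)|^2$ on $M_n(\CC)$ (which follows from writing ${\rm Ad}(U) = L_U R_{U^*}$ under the identification $M_n(\CC) \simeq \CC^n \otimes (\CC^n)^*$) one arrives at
\[
c(\A_1', \A_2') = \int_{U(\A_1)} \int_{U(\A_2)} |{\rm Tr}(V_1 V_2)|^2 \, dV_1\, dV_2.
\]
Now the hypothesis enters decisively: since $V_i \in \A_i$ and $\A_1, \A_2$ are quasi-orthogonal, the trace factorizes as $\tau(V_1 V_2) = \tau(V_1) \tau(V_2)$, so $|{\rm Tr}(V_1 V_2)|^2 = |{\rm Tr}(V_1)|^2 |{\rm Tr}(V_2)|^2 / n^2$. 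The double integral therefore splits as a product, and using the same identity ${\rm Tr}({\rm Ad}(U)) = |{\rm Tr}(U)|^2$ applied to each factor gives $\int_{U(\A_i)} |{\rm Tr}(V_i)|^2 \, dV_i = {\rm Tr}(E_{\A_i'}) = \dim \A_i'$. We obtain the clean identity
\[
c(\A_1', \A_2') = \frac{\dim \A_1' \cdot \dim \A_2'}{n^2}.
\]

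Consequently, $\A_1'$ and $\A_2'$ are quasi-orthogonal if and only if $\dim \A_1' \dim \A_2' = n^2$. To convert this to the dimension condition stated in the lemma, I would invoke the general Cauchy--Schwarz bound $\dim \A \cdot \dim \A' \geq n^2$, obtained from the central decomposition $\A \cong \bigoplus_i M_{n_i}(\CC) \otimes \mathbbm 1_{m_i}$ via $n^2 = (\sum n_i m_i)^2 \leq (\sum n_i^2)(\sum m_i^2)$, with equality iff $\A$ is ``homogeneous'' (automatic for factors and for maximal abelian subalgebras). Combining this with the quasi-orthogonality bound $\dim \A_1 \dim \A_2 \leq n^2$ and multiplying across the pair, $\dim \A_1' \dim \A_2' = n^2$ forces $\dim \A_1 \dim \A_2 = n^2$ and both $\A_i$ homogeneous; conversely, under homogeneity of both $\A_i$, $\dim \A_1 \dim \A_2 = n^2$ gives $\dim \A_1' \dim \A_2' = n^4/(\dim \A_1 \dim \A_2) = n^2$ and hence the quasi-orthogonality of the commutants.

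The main obstacle is establishing the integral representation of $E_{\A'}$; once that is in place, the remainder of the calculation is mechanical bookkeeping driven by the QO hypothesis. A secondary subtlety is the equivalence between the ``primed'' and ``unprimed'' dimension conditions, which is transparent for the factors and MASAs relevant to the rest of the paper (where Cauchy--Schwarz is already an equality) but requires more care for generic subalgebras.
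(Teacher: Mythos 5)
Your route is genuinely different from the paper's. The paper disposes of this lemma in three lines: the computation in (\ref{product-basis}) shows that the scaled products $\sqrt{n}A_iB_j$ form an orthonormal system, so they span $M_n(\CC)$ exactly when ${\rm dim}(\A_1){\rm dim}(\A_2)=n^2$, and the equivalence of that spanning property with quasi-orthogonality of the commutants is simply quoted from \cite[Prop.\! 2]{ohnopetz}. You instead prove an exact identity $c(\A_1',\A_2')={\rm dim}(\A_1'){\rm dim}(\A_2')/n^2$ via the Haar-average representation $E_{\A'}=\int_{U(\A)}{\rm Ad}(V)\,dV$ and the trace identity ${\rm Tr}({\rm Ad}(U))=|{\rm Tr}(U)|^2$; this derivation is correct, self-contained, and is a nice complement to the theorem of section \ref{sec:formula}: it needs no homogeneity hypothesis, at the price of assuming quasi-orthogonality of $\A_1,\A_2$ (which is exactly what makes the double integral factorize). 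Your ``only if'' direction is complete: $c(\A_1',\A_2')=1$ gives ${\rm dim}(\A_1'){\rm dim}(\A_2')=n^2$, and sandwiching with ${\rm dim}(\A_i){\rm dim}(\A_i')\geq n^2$ and the quasi-orthogonality bound ${\rm dim}(\A_1){\rm dim}(\A_2)\leq n^2$ forces equality everywhere.

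The gap is in the ``if'' direction, and you flag it yourself without closing it. The lemma assumes nothing beyond quasi-orthogonality, yet your argument that ${\rm dim}(\A_1){\rm dim}(\A_2)=n^2$ implies ${\rm dim}(\A_1'){\rm dim}(\A_2')=n^2$ is carried out only ``under homogeneity of both $\A_i$''. Without that, Cauchy--Schwarz only gives ${\rm dim}(\A_1'){\rm dim}(\A_2')\geq n^4/({\rm dim}(\A_1){\rm dim}(\A_2))=n^2$, i.e.\! $c(\A_1',\A_2')\geq 1$, which is vacuous. What is missing is the assertion that quasi-orthogonality together with ${\rm dim}(\A_1){\rm dim}(\A_2)=n^2$ already forces both algebras to be homogeneously balanced. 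This is true but is a genuine statement requiring proof; one way to get it is to observe that under these hypotheses the scaled products $\sqrt{n}A_iB_j$ form an orthonormal \emph{basis}, so the map $C\mapsto C\cdot\mathbbm 1$ identifies $M_n(\CC)$, as a left module over $\A_1\otimes\A_2^{\rm op}$ acting by $X\mapsto AXB$, with the algebra $\A_1\otimes\A_2^{\rm op}$ itself; comparing the multiplicities of the irreducible summands on the two sides (they are $m_kq_l$ on the module side and $n_kp_l$ on the algebra side, in the notation $\A_1\simeq\oplus_kM_{n_k}(\CC)\otimes\mathbbm 1_{m_k}$, $\A_2\simeq\oplus_lM_{p_l}(\CC)\otimes\mathbbm 1_{q_l}$) forces $n_k/m_k$ and $q_l/p_l$ to be a common constant. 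Until some such argument is supplied, your proof establishes the lemma only for homogeneously balanced subalgebras --- enough for every application in this paper (factors and MASAs), but not the statement as written.
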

\begin{proof}
The observation established by calculation (\ref{product-basis}) 
shows that the required equality holds if and only if 
the set $\{A_1A_2|\,A_1\in\A_1,A_2\in\A_2\}$ spans
$M_n(\CC)$. Hence our lemma is a simple reformulation of one of the 
claims of the original statement \cite[Prop.\! 2]{ohnopetz} 
\end{proof}

\begin{corollary}\label{1factor}
There is no quasi-orthogonal decomposition of $M_{n^2}(\CC)$ into maximal 
abelian subalgebras and a (single) factor isomorphic to $M_n(\CC)$.
\end{corollary}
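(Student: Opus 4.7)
The plan is to assume by contradiction that such a decomposition exists and then apply the preceding lemma to the commutants.

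First, I would use the third trivial necessary condition to count. A MASA of $M_{n^2}(\CC)$ has dimension $n^2$ and the factor $\F \cong M_n(\CC)$ also has dimension $n^2$, so the decomposition must consist of exactly $(n^4-1)/(n^2-1)=n^2+1$ subalgebras, say $\A_1,\ldots,\A_{n^2},\F$, every one of dimension $n^2$.

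Next I would invoke the lemma. For any two members $\A,\B$ of the decomposition,
\begin{equation}
{\rm dim}(\A)\,{\rm dim}(\B) = n^2 \cdot n^2 = (n^2)^2,
\end{equation}
which is the square of the ambient dimension. The lemma therefore yields that the commutants are pairwise quasi-orthogonal. Since each MASA is self-commutant ($\A_i' = \A_i$) and $\F'\cong M_n(\CC)$ is again a factor of dimension $n^2$, the collection $\A_1,\ldots,\A_{n^2},\F'$ is another quasi-orthogonal system, and, by dimension count, another quasi-orthogonal decomposition of $M_{n^2}(\CC)$.

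The key step is now to observe that the traceless part of the ``last'' subalgebra in such a decomposition is forced: it must be the orthogonal complement, inside the traceless part of $M_{n^2}(\CC)$, of the direct sum of the traceless parts of $\A_1,\ldots,\A_{n^2}$. Applying this to both decompositions gives $\F_0=\F'_0$, and since a subalgebra is the span of its traceless part together with $\mathbbm{1}$, we conclude $\F=\F'$. But then $\F \subseteq \F\cap \F' = \Z(\F) = \CC\mathbbm{1}$, contradicting ${\rm dim}(\F)=n^2\geq 4$. I expect the only mildly delicate point to be the dimension bookkeeping that turns the second system into a genuine decomposition, after which the contradiction is immediate.
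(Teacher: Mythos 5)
Your proof is correct and follows essentially the same route as the paper: both hinge on the Ohno--Petz lemma applied with $\dim(\A_i)\dim(\F)=n^2\cdot n^2=(n^2)^2$ to conclude that $\F'$ is quasi-orthogonal to every MASA. The only (cosmetic) difference is the endgame: the paper notes that $\F'$ is quasi-orthogonal to \emph{all} members of the original decomposition (including $\F$ itself, since $\F$ is a factor) and concludes $\F'=\CC\mathbbm 1$, whereas you pin down $\F'_0$ as the forced orthogonal complement to get $\F=\F'$ and contradict factoriality --- both are valid.
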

\begin{proof}
Suppose the maximal abelian algebras $\A_1,\ldots,\A_{n^2}$ together 
with the factor $\B$ form such a decomposition. Then, since both 
${\rm dim}(\B)={\rm dim}(M_n(\CC))=n^2$
and also the dimension of a maximal abelian subalgebra of $M_{n^2}(\CC)$ 
is $n^2$, by the previous lemma we have that $\B'$ is quasi-orthogonal 
to $\A'_k=\A_k\; (k=1,\ldots,n^2)$. But since $\B$ is a factor, $\B'$ is 
also quasi-orthogonal to $\B$. Hence $\B'$ should be quasi-orthogonal to 
each member of a quasi-orthogonal decomposition, implying that $\B'$ 
should be equal to $\CC \mathbbm 1$ and in turn, that $\B=\B''$ should be 
the full matrix algebra $M_{n^2}(\CC)$ (which is clearly a contradiction).
\end{proof}

\noindent
{Remark.} In \cite{lmsweiner} the author of the present work has shown that
if $\A_1,\ldots,\A_d$ is a system of $d$ 
MASAs in $M_d(\CC)$, then any pair of elements in 
the orthogonal subspace $(\A_1+\ldots +\A_d)^\perp$ must commute. In particular, 
if $\A_1,\ldots,\A_d,\B$ is a quasi-orthogonal system in $M_d(\CC)$ where
$\A_1,\ldots,\A_d$ are MASAs, then $\B$ must be a 
commutative algebra. This is of course a much stronger affirmation than the
above corollary. However, that article uses a much longer proof 
and the method presented here has the further advantage that 
--- as we shall shortly see --- it can be applied to cases when the number of
MASAs is less than $d$. In any case, the aim of the cited
work was to study mutually unbiased bases
(and not quasi-orthogonal decompositions in general); 
the nonexistence of the above discussed system was not
stated explicitly there.
\bigskip

\noindent
Now how about decompositions of $M_{n^2}(\CC)$ into MASAs
and {\it two} factors isomorphic to $M_n(\CC)$? Such 
decompositions, in general, cannot be ruled out. Indeed, as was mentioned, 
in \cite{pszw} the case of $n=2$ was treated and in particular an example was 
given for such a decomposition. Moreover, it was shown 
that there are no decompositions of $M_4(\CC)$ into 
MASAs and factors isomorphic to $M_2(\CC)$ in which the number of 
factors would be $1,3$ or $5$. For general $n>1$, we shall now prove that 
there is no decompositions of $M_{n^2}(\CC)$ into MASAs
and factors isomorphic to $M_n(\CC)$ in which the number 
of factors is $3$ (and we have already seen that nor it can be $1$). We 
will need some preparatory steps.
\begin{lemma}
\label{A1A2}
Let $\A_1$ and $\A_2$ be quasi-orthogonal subalgebras of $M_n(\CC)$ and
$A_1\in \A_1$ and $A_2\in \A_2$ two traceless operators. Then
$A_jA_k \in \A_1+\A_2$ if $j=k$ whereas if $j\neq k$ then
$A_jA_k$ is orthogonal to the subspace $\A_1+\A_2$.
\end{lemma}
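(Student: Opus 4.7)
The case $j=k$ is essentially a tautology: since $\A_1$ and $\A_2$ are subalgebras, $A_1^2\in\A_1\subset\A_1+\A_2$ and $A_2^2\in\A_2\subset\A_1+\A_2$. So the content of the lemma is really the second claim, and my plan is to prove orthogonality by computing Hilbert--Schmidt inner products against an arbitrary element of each summand and using the trace-criterion (\ref{product-trace}) for quasi-orthogonality together with the tracelessness assumption.

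Concretely, I would fix $X\in\A_1$ and compute
\begin{equation}
\langle X,\, A_1A_2\rangle = {\rm Tr}(X^*A_1A_2) = n\,\tau\bigl((X^*A_1)\,A_2\bigr).
\end{equation}
Since $X^*A_1\in\A_1$ and $A_2\in\A_2$, quasi-orthogonality gives $\tau((X^*A_1)A_2)=\tau(X^*A_1)\tau(A_2)$, which vanishes because $A_2$ is traceless. Symmetrically, for $Y\in\A_2$ I would use cyclicity of the trace to write
\begin{equation}
\langle Y,\, A_1A_2\rangle = n\,\tau(Y^*A_1A_2) = n\,\tau\bigl(A_1\,(A_2Y^*)\bigr) = n\,\tau(A_1)\,\tau(A_2Y^*) = 0,
\end{equation}
using tracelessness of $A_1$. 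Thus $A_1A_2\perp\A_1$ and $A_1A_2\perp\A_2$, so $A_1A_2\perp(\A_1+\A_2)$; the case of $A_2A_1$ is completely analogous (or follows by taking adjoints since the traceless subspaces are $*$-invariant).

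There is no real obstacle here: the only point that requires a little care is the grouping $X^*A_1$ (respectively $A_2Y^*$) into a single element of the appropriate subalgebra before applying the trace-factorization, and then noting that the isolated factor left over is exactly the traceless one. The lemma is really a bookkeeping companion to the key identity (\ref{product-basis}), repackaging the fact that products $A_iB_j$ of traceless basis elements are orthogonal to each $\A_\ell$ into a form usable for later arguments about decompositions of $M_{n^2}(\CC)$.
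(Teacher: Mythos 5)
Your proof is correct and follows essentially the same route as the paper: in both cases one pairs $A_1A_2$ against an arbitrary element of $\A_1$ (respectively $\A_2$), regroups the product so that one factor lies in a single subalgebra, invokes cyclicity of the trace for the second pairing, and concludes from quasi-orthogonality together with the tracelessness of the leftover factor. The only cosmetic difference is that you phrase the vanishing via the factorization $\tau(AB)=\tau(A)\tau(B)$ of (\ref{product-trace}) while the paper phrases it as orthogonality of a traceless element to the other subalgebra; these are the same fact.
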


\begin{proof}
Apart from trivial affirmations, all we have to check that 
is that the ``cross-terms'' $A_jA_k$ (where $j\neq k$) are orthogonal
to the subspace $\A_1+\A_2$. If $X\in \A_1$, then 
\begin{equation}
\langle X, A_1A_2\rangle = 
{\rm Tr}(X^*A_1A_2)=
{\rm Tr}((A_1^* X)^*W_2)
=
\langle (A_1^* X), A_2\rangle = 0
\end{equation}
since $(A_1^* X)\in \A_1$ whereas $A_2$ is a traceless operator in $\A_2$
which is supposed to be quasi-orthogonal to $\A_1$. If $X\in \A_2$ then using the 
invariance of trace under cyclic permutations we still get that 
\begin{equation}
\langle X, A_1A_2\rangle = 
{\rm Tr}(X^*A_1A_2)=
{\rm Tr}(A_1A_2 X^*)
=
\langle A_1^*, (A_2 X^*)\rangle = 0
\end{equation}
as the traceless element $A_1^*$ of $\A_1$ is orthogonal to any element of $\A_2$ (and in particular,
to $A_2X^*$). The rest (the orthogonality of the other cross-term: $A_2A_1$) follows by symmetry of the argument.
\end{proof}
\begin{lemma}
\label{BcapA=1}
Let $\A_1$ and $\A_2$ be quasi-orthogonal subalgebras of $M_n(\CC)$ and 
suppose that $\B$ is a third subalgebra of $M_n(\CC)$ such that $\B\subset \A_1+\A_2$.
Then either $\B\subset \A_j$ for some $j=1,2$ or $\B\cap \A_1 = \B\cap \A_2 = \CC \mathbbm 1$.
\end{lemma}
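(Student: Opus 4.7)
The plan is to prove the contrapositive: if $\B \cap \A_j \neq \CC \mathbbm 1$ for some $j \in \{1,2\}$, then $\B \subset \A_j$. By symmetry I only need to treat the case $j=1$.

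First I would use the quasi-orthogonality of $\A_1$ and $\A_2$, which makes the traceless parts $\A_1 \cap \{\mathbbm 1\}^\perp$ and $\A_2 \cap \{\mathbbm 1\}^\perp$ mutually orthogonal, to produce the orthogonal decomposition
\begin{equation*}
\A_1 + \A_2 = \CC \mathbbm 1 \oplus (\A_1 \cap \{\mathbbm 1\}^\perp) \oplus (\A_2 \cap \{\mathbbm 1\}^\perp).
\end{equation*}
Every $Y \in \B \subset \A_1 + \A_2$ then admits a unique expansion $Y = c\mathbbm 1 + Y_1 + Y_2$ with $c \in \CC$ and $Y_j$ a traceless element of $\A_j$. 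Showing $\B \subset \A_1$ thus reduces to showing that $Y_2 = 0$ for every such $Y$.

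To exploit the hypothesis $\B \cap \A_1 \supsetneq \CC \mathbbm 1$, I would pick an especially useful witness $X \in \B \cap \A_1$: since $\B \cap \A_1$ is a finite-dimensional unital $^*$-subalgebra of $M_n(\CC)$ strictly larger than $\CC \mathbbm 1$, it must contain a projection $P$ distinct from $0$ and $\mathbbm 1$, and for such a projection $\tau(P) \in (0,1)$. Setting $X = P - \tau(P)\mathbbm 1$ gives a nonzero traceless element of $\B \cap \A_1$. For any $Y \in \B$ I would then compare two descriptions of $XY$: on the one hand $XY \in \B \subset \A_1 + \A_2$, and expanding $XY = cX + XY_1 + XY_2$ the first two summands already lie in $\A_1$, forcing $XY_2 \in \A_1 + \A_2$; on the other hand, Lemma~\ref{A1A2} applied to the traceless pair $X \in \A_1,\, Y_2 \in \A_2$ yields $XY_2 \perp \A_1 + \A_2$. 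Combining, $XY_2 = 0$, i.e.\ $P Y_2 = \tau(P) Y_2$.

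The final step, which I expect to be the subtlest, is to convert this ``eigenvalue'' identity into $Y_2 = 0$. Here the projection structure of $P$ pays off: multiplying $PY_2 = \tau(P) Y_2$ on the left by $P$ and using $P^2 = P$ gives $PY_2 = \tau(P)^2 Y_2$, so $\tau(P)(1-\tau(P))Y_2 = 0$, and since $\tau(P) \in (0,1)$ one concludes $Y_2 = 0$. The main obstacle lies precisely in this last inference: a generic nonzero traceless $X \in \B \cap \A_1$ need not be invertible, so the annihilation relation $XY_2 = 0$ alone would not give $Y_2 = 0$; choosing $X = P - \tau(P) \mathbbm 1$ for a nontrivial projection $P$ is exactly what promotes the annihilation to a nontrivial spectral condition and thereby closes the argument.
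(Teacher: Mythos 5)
Your proof is correct, and while its skeleton matches the paper's --- both use the orthogonal decomposition $\A_1+\A_2=\CC\mathbbm 1\oplus(\A_1\cap\{\mathbbm 1\}^\perp)\oplus(\A_2\cap\{\mathbbm 1\}^\perp)$ and both invoke Lemma~\ref{A1A2} to force the cross-product to vanish --- the decisive final inference is genuinely different. The paper fixes a single $B\in\B$ lying in neither $\A_j$ (so both traceless components $B_1,B_2$ are nonzero), derives $X_0B_2=0$ for the traceless part $X_0$ of an arbitrary $X\in\B\cap\A_1$, and concludes $X_0=0$ from the norm identity of equation (\ref{product-basis}): for elements of quasi-orthogonal subalgebras the trace-norm of a product is, up to a factor of $\sqrt{n}$, the product of the norms, so a vanishing product forces one factor to vanish. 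You argue in the dual direction: you extract a nontrivial projection $P\in\B\cap\A_1$ (legitimate, since a finite-dimensional unital $^*$-algebra strictly larger than $\CC\mathbbm 1$ always contains one), obtain $(P-\tau(P)\mathbbm 1)Y_2=0$ for the $\A_2$-component of an arbitrary $Y\in\B$, and then use $P^2=P$ to get $\tau(P)(1-\tau(P))Y_2=0$, hence $Y_2=0$ since $\tau(P)\in(0,1)$. Your route buys independence from the norm-multiplicativity observation, needing only Lemma~\ref{A1A2} plus the standard fact that nontrivial $^*$-algebras contain nontrivial projections; the paper's route works with an arbitrary nonzero traceless element of the intersection, which is the more flexible tool (the same norm argument is reused in the theorem that follows this lemma, where no projection is available). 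Both proofs correctly locate the subtlety in upgrading the annihilation relation to the vanishing of a factor.
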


\begin{proof}
Suppose there exists a $B\in \B$ which is neither in $\A_1$ nor in $\A_2$.  Then its traceless part 
\begin{equation}
B_0 = B - \tau(B)\mathbbm 1 = B- ({\rm Tr}(B)/n)\mathbbm 1
\end{equation}
is still an element of $\B$ which is neither in $\A_1$ nor in $\A_2$, so 
\begin{equation}
B_0=B_1 + B_2
\end{equation}
for some $B_j\in \A_j$ nonzero traceless operators $(j=1,2)$.
If  $X\in \B \cap \A_1$ then again its traceless part  $X_0$ is still in the intersection $\B\cap \A_1$.
Thus $X_0 B_1\in \A_1$ whereas 
by the previous lemma $X_0B_2$ is orthogonal to $\A_1+\A_2$ since
$X_0\in \A_1$.  On the other hand, as $X_0\in \B$, we have that
$X_0 B_1 + X_0 B_2 = X_0 B_0 \in \B\subset \A_1+ \A_2$. These two things imply that
$X_0 B_2 = 0$. 

Of course the fact that $B_2\neq 0$ is not enough for showing that $X_0=0$.
However, the argument presented at equation (\ref{product-basis}) shows that --- apart from a
factor depending on the dimension $n$ --- the {\it trace-norm} of a product of two elements belonging to two quasi-orthogonal subalgebras is simply the product of norms. Hence in our case $X_0 B_2 = 0$ 
actually {\it does} imply that one of the terms in the product must be zero, and so that $X_0=0$. 
Thus the arbitrary element $X$ of the intersection $\B\cap \A_1$ is a multiple of the identity; 
that is $\B\cap \A_1 = \CC\mathbbm 1$. The rest of the claim follows by repeating the argument with 
$\A_1$ and $\A_2$ exchanged.
\end{proof}

\begin{lemma}
\label{anti-comm}
Let $\A_1$ and $\A_2$ be quasi-orthogonal subalgebras of $M_n(\CC)$ and 
suppose that $A_j\in \A_j$ are traceless operators ($j=1,2$) 
such that $A^2\in \A_1+\A_2$ where $A=A_1+A_2$. Then $A_1$ and $A_2$ must 
anti-commute.
\end{lemma}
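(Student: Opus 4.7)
The plan is very short because Lemma~\ref{A1A2} does almost all the work for us. I would simply expand the square and read off the decomposition of $A^2$ relative to the subspace $\A_1+\A_2$ and its orthogonal complement.

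Concretely, I would first write
\begin{equation}
A^2 = (A_1+A_2)^2 = A_1^2 + A_2^2 + (A_1A_2 + A_2A_1).
\end{equation}
By Lemma~\ref{A1A2} applied with $j=k$, the terms $A_1^2$ and $A_2^2$ lie in $\A_1$ and $\A_2$ respectively (in particular, in $\A_1+\A_2$), while applied with $j\neq k$ it gives that each of the ``cross terms'' $A_1A_2$ and $A_2A_1$ is orthogonal to $\A_1+\A_2$. Hence the decomposition above is precisely the orthogonal decomposition of $A^2$ into its component in $\A_1+\A_2$ and its component in $(\A_1+\A_2)^\perp$.

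Now I would use the hypothesis $A^2 \in \A_1+\A_2$: this forces the $(\A_1+\A_2)^\perp$-component of $A^2$ to vanish, i.e.\ $A_1A_2 + A_2A_1 = 0$, which is exactly the anti-commutation relation claimed. There is no real obstacle here; the only thing to be slightly careful about is that Lemma~\ref{A1A2} requires both $A_1$ and $A_2$ to be traceless, but this is part of the hypothesis of the present lemma, so it applies directly.
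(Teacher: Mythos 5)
Your proposal is correct and matches the paper's argument exactly: the paper likewise expands $A^2 = A_1^2 + A_2^2 + A_1A_2 + A_2A_1$, invokes Lemma~\ref{A1A2} to place the first two terms in $\A_1+\A_2$ and the cross terms in its orthogonal complement, and concludes $A_1A_2+A_2A_1=0$ from the hypothesis. Nothing is missing.
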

\begin{proof}
The claim is evident because by the previous lemma, in the expansion
$A^2= A_1^2+A_2^2 + A_1A_2+A_2 A_1$,
the first two terms are in $\A_1+\A_2$ whereas the last two terms are orthogonal to this subspace.
\end{proof}

\begin{theorem}
Let $\A_1$ and $\A_2$ be quasi-orthogonal subalgebras of $M_n(\CC)$ and 
suppose that $\B$ is a third subalgebra of $M_n(\CC)$ such that $\B\subset \A_1+\A_2$.
Then either $\B\subset \A_j$ for some $j=1,2$ or $\B\simeq \CC^2$.
\end{theorem}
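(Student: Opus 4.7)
The plan is to argue by contradiction: assume $\B \not\subset \A_1$ and $\B \not\subset \A_2$ and show that $\dim \B = 2$, which --- since $\B$ is a $*$-subalgebra of $M_n(\CC)$ containing $\mathbbm 1$ --- forces $\B \simeq \CC^2$. Lemma \ref{BcapA=1} immediately gives $\B \cap \A_j = \CC\mathbbm 1$ for both $j$. Setting $\B_0 := \B \cap \{\mathbbm 1\}^\perp$ and decomposing each $B \in \B_0$ uniquely as $B = B_1 + B_2$ with $B_j \in \A_j \cap \{\mathbbm 1\}^\perp$, this translates into the key dichotomy: every nonzero element of $\B_0$ has both $B_1$ and $B_2$ nonzero.

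Two ingredients drive the proof. First, for every $B, C \in \B_0$ the product $BC$ lies in $\B \subset \A_1 + \A_2$; expanding $BC = B_1 C_1 + B_2 C_2 + (B_1 C_2 + B_2 C_1)$ and noting (via Lemma \ref{A1A2}) that the cross terms are orthogonal to $\A_1 + \A_2$, one reads off
\[
B_1 C_2 + B_2 C_1 = 0, \qquad BC = B_1 C_1 + B_2 C_2,
\]
which specialize, at $C=B$, to the anti-commutation $B_1 B_2 + B_2 B_1 = 0$ of Lemma \ref{anti-comm}. Second, the trace property (\ref{product-trace}) of quasi-orthogonality yields $\|X_1 Y_2\|^2 = \tfrac{1}{n}\|X_1\|^2 \|Y_2\|^2$ for $X_1 \in \A_1,\, Y_2 \in \A_2$, so that a product $X_1 Y_2$ vanishes only if $X_1 = 0$ or $Y_2 = 0$.

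Both remaining steps follow a common template: given $Z \in \B_0$ with components $Z_1, Z_2$, show that $Z_1$ and $Z_2$ both \emph{commute} (via the relations above used ``one level lower'') and \emph{anti-commute} (Lemma \ref{anti-comm} applied to $Z$); then $Z_1 Z_2 = 0$, the norm identity kills one factor, and the dichotomy forces $Z = 0$. The first application is $Z := (B^2)_0$ for a self-adjoint $B \in \B_0$, whose components are $B_j^2 - \tau(B_j^2)\mathbbm 1$; the commutation $[B_1^2, B_2^2] = 0$ is a one-line consequence of $\{B_1, B_2\} = 0$. We conclude that every self-adjoint $B \in \B_0$ satisfies $B^2 \in \CC\mathbbm 1$.

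For the final step, take $B, C \in \B_0$ self-adjoint. Applying the just-proved fact to $B + tC$, $t \in \RR$, forces $\{B, C\} \in \CC\mathbbm 1$; replacing $C$ by $C - \lambda B$ for the appropriate real $\lambda$ (which exists because $B^2 > 0$ as a scalar), we may assume $\{B, C\} = 0$. Take $Z := iBC$: self-adjointness follows from $(iBC)^* = -iCB = iBC$, and tracelessness from $\tau(BC) = \tfrac{1}{2}\tau(\{B, C\}) = 0$, so $Z \in \B_0$ with components $i(B_j C_j - \tau(B_j C_j)\mathbbm 1)$. The delicate point --- and the main obstacle --- is verifying that $B_1 C_1$ and $B_2 C_2$ commute in $M_n(\CC)$: juggling $B_1 B_2 = -B_2 B_1$, $C_1 C_2 = -C_2 C_1$, and $B_1 C_2 = -B_2 C_1$ shows $B_1 C_1 B_2 C_2 = -(B_1 C_2)^2 = -(B_2 C_1)^2 = B_2 C_2 B_1 C_1$. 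The template then forces $iBC = 0$; but $(BC)^*(BC) = CB^2C = B^2 C^2 \in \CC\mathbbm 1$ is a strictly positive scalar, so $\|BC\|^2 > 0$, a contradiction. Hence any two self-adjoint elements of $\B_0$ are linearly dependent; by $*$-closure of $\B$ we get $\dim_\CC \B_0 \leq 1$ and therefore $\dim \B = 2$, giving $\B \simeq \CC^2$.
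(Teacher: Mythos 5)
Your proof is correct, and it reaches the conclusion by a genuinely different route from the paper's. Both arguments rest on the same foundation --- Lemma \ref{A1A2} (cross-terms are orthogonal to $\A_1+\A_2$), Lemma \ref{BcapA=1} (so that a traceless element of $\B$ with one vanishing component must itself vanish), Lemma \ref{anti-comm}, and the multiplicativity $\|X_1Y_2\|^2=\frac{1}{n}\|X_1\|^2\|Y_2\|^2$ for $X_1\in\A_1$, $Y_2\in\A_2$ --- but the engines differ. The paper fixes a self-adjoint traceless $B=B_1+B_2\in\B$, uses $\dim(\B_0)\geq 2$ to choose an auxiliary $\tilde B\in\B_0$ whose $\A_2$-component is nonzero and orthogonal to $B_2$, and then shows by a direct inner-product computation that the two cross-terms of $B\tilde B$ are orthogonal as well as summing to zero; this forces $B_1=0$ and hence $B=0$, so that no nonzero self-adjoint traceless element survives. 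You instead run a ``commute and anti-commute, hence the product vanishes'' template on the elements $(B^2)_0$ and $iBC$ of $\B_0$, obtaining the stronger intermediate facts that every self-adjoint $B\in\B_0$ satisfies $B^2\in\CC\mathbbm 1$ and that any two such elements are proportional; this pins down $\dim(\B)\leq 2$ with no dimension count and no choice of auxiliary element. The one delicate point is the commutator check $[B_1C_1,B_2C_2]=0$: the chain $B_1C_1B_2C_2=-(B_1C_2)^2$ uses $C_1B_2=-C_2B_1$ in addition to the three relations you list, but this follows at once by taking adjoints of $B_1C_2=-B_2C_1$ (the components of self-adjoint elements are themselves self-adjoint, by uniqueness of the traceless decomposition), so it is bookkeeping rather than a gap. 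On balance the paper's argument is shorter; yours is more systematic and exposes more of the structure forced on $\B$, e.g.\! that its traceless self-adjoint elements square to scalars.
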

\begin{proof}
Assume by contradiction that $\B$  is not included in neither of the two given quasi-orthogonal subalgebras, but $\B$ is not isomorphic to $\CC^2$. Then ${\rm dim}(B)>2$ (since up to isomorphism, 
there is only one two dimensional star-algebra: $\CC^2$) and so the
traceless part of  $\B$,
\begin{equation}
\B_0:= \B\cap \{\mathbbm 1\}^\perp
\end{equation}
is at least 2-dimensional. 

Let $E_j$ be the trace-preserving expectation onto $\A_j$ for $j=1,2$.  For any traceless element
$X\in \A_1+\A_2$ we have that $X=E_1(X)+E_2(X)$, so $\B_0\subset E_1(\B_0)+E_2(\B_0)$. 
Moreover,  this inclusion cannot be an equality, since in that case $\B_0$ would nontrivially 
intersect $\A_1$ or $\A_2$, contradicting to our previous lemma. 
Thus at least one out of the subspaces: $E_1(\B_0), E_2(\B_0)$ must be more than
1-dimensional; we may assume that ${\rm dim}(E_2(\B_0))>1$.

Let $B\in \B$ be a traceless self-adjoint element. Then $B=B_1+B_2$ where $B_j=E_j(B)$ $(j=1,2)$, 
and since  ${\rm dim}(E_2(\B_0))>1$, there exists a $\tilde{B}\in \B_0$ such that 
$\tilde{B}_2=E_2(\tilde{B})$ is nonzero and orthogonal to $B_1$.  As $\B$ is an algebra, we have that 
$B\tilde{B}\in \B\subset \A_1+\A_2$. But $B\tilde{B} = B_1\tilde{B}_1+ B_2\tilde{B}_2 + 
B_1\tilde{B_2}+ B_2\tilde{B}_1$, and according to lemma \ref{A1A2}, the first two 
terms in this sum are in $\A_1+\A_2$ whereas the last two terms are orthogonal to this subspace, so 
actually $B_1\tilde{B_2}+ B_2\tilde{B}_1 = 0$. On the other hand, by using the product-property 
(\ref{product-trace}), the anti-commutativity of $B_1$ and $B_2$ (assured by lemma \ref{anti-comm}),
 and the fact that $B_2= E_2(B)$ is self-adjoint (as so is $B$), we have that 
\begin{eqnarray}
\nonumber
\langle B_1\tilde{B_2}, B_2\tilde{B}_1 \rangle &=& 
{\rm Tr}((B_1\tilde{B_2})^*B_2\tilde{B}_1) = 
{\rm Tr}(\tilde{B_2}^*B_1 B_2\tilde{B}_1) 
= - {\rm Tr}(\tilde{B_2}^*B_2 B_1\tilde{B}_1 )
\\
&=& n \tau(\tilde{B_2}^*B_2 B_1\tilde{B}_1 )
=  
 n \tau(\tilde{B_2}^*B_2) \tau(B_1\tilde{B}_1) = 0,
  \end{eqnarray}
since by assumption $0=\langle \tilde{B_2}, B_2\rangle= {\rm Tr}(\tilde{B_2}^*B_2)= 
n \tau(\tilde{B_2}^*B_2)$. Thus $B_1\tilde{B_2}=B_2\tilde{B}_1=0$ since they are orthogonal but 
their sum is zero. As $\tilde{B}_2\neq 0$, this implies (by the argument already explained towards the end of the proof of lemma \ref{BcapA=1}) that $\B_1=0$. That is, $B=B_2$ is actually an element of $\A_2$. But our assumption, together with lemma \ref{BcapA=1} imply that 
 $\B\cap \A_2 = \CC\mathbbm 1$. It should then further follow that $B=0$; that is, we have shown that any self-adjoint traceless element in $\B$ is zero and hence that $\B=\CC\mathbbm 1$ which contradicts to the assumption that $\B$ is {\it not} a subalgebra of $\A_1$ or $\A_2$. 
 \end{proof}

\begin{theorem}
There are no quasi-orthogonal decompositions of $M_{n^2}(\CC)$ into 
maximal abelian subalgebras and factors isomorphic to $M_n(\CC)$ in which 
the number of factors would be $1$ or $3$.
\end{theorem}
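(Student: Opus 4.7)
The single-factor case is Corollary~\ref{1factor}; I focus on three factors. Assume for contradiction that $\B_1,\B_2,\B_3$ are factors isomorphic to $M_n(\CC)$ and $\A_1,\ldots,\A_{n^2-2}$ are MASAs together forming a quasi-orthogonal decomposition of $M_{n^2}(\CC)$ (the case $n=1$ is vacuous; assume $n\geq 2$). The plan is to combine the commutant lemma with the theorem immediately above, and to finish by a short parity argument.

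The first step is to show that $\B_i'\subset \B_j+\B_k$ whenever $\{j,k\}=\{1,2,3\}\setminus\{i\}$. Since $\B_i$ is a factor, $\B_i'$ is quasi-orthogonal to $\B_i$. Since $\dim(\B_i)\dim(\A_\ell)=n^2\cdot n^2=(n^2)^2$ for each $\ell$, the commutant lemma gives that $\B_i'$ is quasi-orthogonal to $\A_\ell'=\A_\ell$. Hence the traceless part of $\B_i'$ is orthogonal to the traceless parts of $\B_i$ and of every $\A_\ell$; as the original system is a quasi-orthogonal \emph{decomposition}, the orthogonal complement of these (inside the traceless part of $M_{n^2}(\CC)$) is exactly the traceless part of $\B_j+\B_k$. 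Restoring the identity yields $\B_i'\subset \B_j+\B_k$.

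Next apply the theorem proved just above with the quasi-orthogonal pair $\B_j,\B_k$ and with $\B=\B_i'$. Since $\B_i'\simeq M_n(\CC)$ has dimension $n^2\geq 4$, it cannot be isomorphic to $\CC^2$; thus $\B_i'\subset \B_j$ or $\B_i'\subset \B_k$, and comparing dimensions (both sides are $n^2$-dimensional) forces $\B_i'$ to \emph{equal} either $\B_j$ or $\B_k$. This defines a map $\sigma\colon\{1,2,3\}\to\{1,2,3\}$ with $\B_i'=\B_{\sigma(i)}$ and $\sigma(i)\neq i$. Taking commutants, $\B_i=(\B_i')'=\B_{\sigma(\sigma(i))}$; since $\B_1,\B_2,\B_3$ are pairwise distinct (pairwise quasi-orthogonal subalgebras strictly larger than $\CC\mathbbm{1}$ are automatically distinct), one has $\sigma\circ\sigma=\mathrm{id}$. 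Thus $\sigma$ is a fixed-point-free involution of $\{1,2,3\}$, which cannot exist on a set of odd cardinality; contradiction.

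The conceptually hard part is the theorem above; here the only substantive step is the containment $\B_i'\subset \B_j+\B_k$, and that reduces to a dimension count after invoking the Ohno--Petz commutant lemma. I do not foresee any further obstacle.
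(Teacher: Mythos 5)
Your proposal is correct and follows essentially the same route as the paper: the Ohno--Petz commutant lemma gives that the commuted system is again a quasi-orthogonal decomposition, forcing $\B_i'\subset\B_j+\B_k$, the preceding theorem upgrades this to $\B_i'=\B_j$ or $\B_k$, and the fixed-point-free involution on a three-element set yields the contradiction. The only (harmless) difference is presentational: you justify $\sigma(i)\neq i$ directly from the construction, while the paper argues it from $\B_i'\neq\B_i$.
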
 

\begin{proof}
We have already proved the case in which the number of factors is $1$, so now assume by 
contradiction that we have a quasi-orthogonal decomposition containing three factors: $\B_1,\B_2,\B_3$
(all isomorphic to $M_n(\CC)$) and $n^2-2$ MASAs $\A_1,\ldots, \A_{n^2-2}$. Then, as was already noted and applied, considering the commutants: $\B_1',\B_2',\B_3',\A_1,\ldots, 
\A_{n^2-2}$
(where we have used that the commutant of a MASA is itself), we
still have a quasi-orthogonal decomposition.  Thus, $\B_1'$ is quasi-orthogonal to both $\B_1$ (since it is a factor) and the algebras $\A_1,\ldots, \A_{n^2-2}$ and hence $\B_1\subset 
\B_2+\B_3$. By our previous theorem it then follows that $\B_1'$ is either equal to $\B_2$ or to 
$\B_3$. Repeating our argument for $\B_2$ and $\B_3$, we see that the $\B'_j=\B_{\sigma(j)}$
for some $\sigma:\{1,2,3\}\to\{1,2,3\}$ such that:
\begin{itemize}
\item $\sigma^2= {\rm id}$ (since the second commutant gives back the original algebra),
\item $\sigma$ has no fixed points ($\B'_j\neq \B_j$ as $\B_j$ is not a MASA). 
\end{itemize}
However, these two properties are evidently contradicting.
\end{proof}

\section{The trace formula}
\label{sec:formula}

Suppose $P$ and $Q$ are the ortho-projections onto the subspaces $N$ and $K$, 
respectively. By elementary arguments involving traces and positive 
operators, one has that ${\rm Tr}(PQ)$ is a nonnegative real number,
\begin{equation}
{\rm dim}(N\cap K) \leq {\rm Tr}(PQ)\leq {\rm min}\{{\rm dim}(N), \, {\rm dim}(K)\},
\end{equation}
and moreover that ${\rm dim}(N\cap K) = {\rm Tr}(PQ)$ if and only if 
$N\cap (N\cap K)^\perp$ and $K\cap (N\cap K)^\perp$  are orthogonal.
Thus we may say that the nonnegative number ${\rm Tr}(PQ)-{\rm dim}(N\cap K)$ 
measures ``how much'' the subspaces $N\cap (N\cap K)^\perp$ and 
$K\cap (N\cap K)^\perp$  are {\it not} orthogonal. This number is zero if and only 
if they are orthogonal, and in some sense the bigger it is, the further away they 
are from orthogonality. Let us see now what this has to do with quasi-orthogonal subalgebras.

A subalgebra 
$\A\subset M_n(\CC)$ is in particular a linear subspace. As was 
discussed, $M_n(\CC)$ has a natural scalar product, so it is meaningful to talk 
about orthogonality. Thus we may consider the ortho-projection $E_\A$ onto $\A$. 
Note that this map is usually referred as the {\it unique trace-preserving 
expectation} onto $\A$. For two subalgebras $\A,\B \subset M_n(\CC)$ we shall now 
introduce the quantity
\begin{equation}
c(\A,\B):={\rm Tr}(E_\A E_\B).
\end{equation}
Note that here ${\rm Tr}$ is the trace of the set of linear operators {\it acting} 
on $M_n(\CC)$, and not the trace of $M_n(\CC)$. 

Recall that by ``subalgebra'' we always mean a $^*$-subalgebra containing the 
identity, so $\A \cap \B$ is at least one-dimensional. Thus
$c(\A,\B)$ is a nonnegative real and infact
\begin{equation}
1\leq c(\A,\B) \leq {\rm min}\{{\rm dim}(\A), {\rm dim}(\B)\}
\end{equation}
with $c(\A,\B)=1$ if and only if $\A$ and $\B$ are quasi-orthogonal.

We are interested by the relation between the quantities 
$c(\A,\B)$ and $c(\A',\B')$ where $\A'$ and $\B'$ are the commutants of $\A$ and
$\B$, respectively. The next example shows that in general, $c(\A',\B')$ cannot 
be determined by the value of $c(\A,\B)$ and the (unitary) equivalence 
classes\footnote{Two isomorphic subalgebras of $M_n(\CC)$ (that is: 
$^*$-subalgebras containing $\mathbbm 1\in M_n(\CC)$) are are not 
necessarily unitarily equivalent. For example, if $P$ and $Q$ are 
ortho-projections in $M_4(\CC)$ onto subspaces of dimensions $2$ and $3$, 
respectively, then $\A \simeq \B \simeq \CC^2$ where $\A=\CC P + \CC \mathbbm 1$ 
and $\B=\CC Q + \CC \mathbbm 1$. However, clearly there is no unitary $U\in 
M_n(\CC)$ such that $U\A U^*$ would coincide with $\B$.} of the subalgebras $\A$ and $\B$.
\bigskip

\noindent
{\it Example.} Let $\A:=M_4(\CC)\otimes \mathbbm 1_4 \subset M_4(\CC)\otimes 
M_4(\CC)\equiv M_{16}(\CC)$ and $\tilde{\A}:=\A'=\mathbbm 1_4 \otimes M_4(\CC)$. 
Then clearly, $\A$ and $\tilde{\A}$ are unitarily equivalent.
Let further $P_1\in M_4(\CC)$ be an ortho-projection onto a 
one-dimensional subspace and $P_2\in M_4(\CC)$ an ortho-projection onto a
two-dimensional subspace. Finally, let $\B:=\D_1 \otimes \D_2$ where 
$\D_1,\D_2\subset M_4(\CC)$ are the abelian subalgebras generated by the single 
ortho-projections $P_1$ and $P_2$, respectivly. Then, as all of the 
algebras $\A, \A', \tilde{\A}, \tilde{\A}', \B,\B'$ have a product-form, it 
is easy to see that 
\begin{equation}
E_\A E_\B = (id_4 \otimes {\rm Tr}_4) (E_{\D_1}\otimes 
E_{\D_2}) = E_{\D_1}\otimes {\rm Tr}_4 = E_{\D_1\otimes \mathbbm 1_4},
\end{equation}
 so 
$c(\A,\B) = {\rm Tr}(E_\A E_\B) = {\rm Tr}(E_{\D_1\otimes \mathbbm 1_4}) = 
{\rm dim}(\D_1) = 2$
and similarly, that $c(\tilde{\A},\B)$ is also equal to $2$. However, as 
$\B'=\D_1'\otimes \D_2'$ while the commutants of $\A$ and $\tilde{\A}$ 
are $\tilde{\A}$ and $\A$, respectively, we have that
\begin{equation}
c(\A',\B') = {\rm dim}(\D_2') = 2^2 + 2^2 \neq 1^2 + 3^2 = {\rm dim}(\D_1') = 
c(\tilde{\A},\B').
\end{equation}
Thus the value of $c(\A,\B)$, even together with the knowledge of the unitary 
equivalence classes of $\A$ and $\B$, is insufficient for determining $c(\A',\B')$.
\bigskip

\noindent
A subalgerba, up to unitary equivalence, is always of the form
\begin{equation}
\A= \oplus_{k} \left(
M_{n_k}(\CC) \otimes \mathbbm 1_{m_k}
\right) \, \subset M_n(\CC) 
\end{equation}
where $n=\sum_k n_k m_k$
(and $\mathbbm 1_x$ is the unit of $M_x(\CC)$)
with commutant
\begin{equation}
\A'= \oplus_{k} \left(\mathbbm 1_{n_k}(\CC) \otimes M_{m_k}(\CC)
\right)\, \subset M_n(\CC).
\end{equation}
In case the ratios $n_k/m_k$ are independent of the index $k$,
we shall say that the subalgebra $\A$ is {\bf homogeneously balanced}. 
Note that if $n_k/m_k = \lambda$ for all indices $k$, then
$n= \sum_k n_k m_k = \lambda \sum_k n_k^2 =\lambda {\rm dim}(\A)$ and
${\rm dim}(\A) = \sum_k n_k^2 = \lambda^2 \sum_k m_k^2 = \lambda^2 {\rm dim}(\A')$.
Some evident, but important consequences 
of our definition and this last remark are:
\begin{itemize}
\item
$\A$ is homogeneously balanced  if and only if so is $\A'$,
\item
if $\A\subset M_n(\CC)$ and $\B\subset M_m(\CC)$ are homogeneously balanced 
then so is the tenzorial product  
$\A\otimes \B \subset M_n(\CC)\otimes M_m(\CC) \equiv M_{nm}(\CC)$,
\item
factors and MASAs are automatically homogeneously balanced,
\item
if $\A$ is homogeneously balanced then ${\rm dim}(\A){\rm dim}(\A') = n^2$
\item
this homogeneity, in general, is not only a condition about the isomorphism class of 
$\A$, but also a condition about the way it ``sits'' in $M_n(\CC)$: of two 
isomorphic subalgebras, one may be homogeneously balanced while the other is not.
\end{itemize}
Note that the algebra $\B$ in the previous example was {\it not} homogeneously balanced.
We shall now recall a simple, but important fact; for the more general statement and its proof
see \cite[Prop.\! 1]{ohnopetz}.
\begin{lemma}
If $\A\subset M_n(\CC)$ is homogeneously balanced and
$A_1,\ldots A_N$ is an 
ortho-normed basis of $\A$, then 
$E_\A(X) = \frac{n}{N}\sum_{j=1}^N A_k X A_k^*$
for all $X\in M_n(\CC)$.
\end{lemma}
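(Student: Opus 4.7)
The plan is to analyze the linear map $\Phi:M_n(\CC)\to M_n(\CC)$ defined by $\Phi(X):=\sum_{k=1}^N A_k X A_k^*$ and identify it as a constant multiple of the trace-preserving expectation onto the commutant of $\A$; the identity of the lemma then follows by rearrangement.

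First I would verify that $\Phi$ is independent of the choice of orthonormal basis: if $B_k=\sum_l U_{kl}A_l$ for some unitary $U\in M_N(\CC)$, direct expansion collapses the double sum to $\sum_k B_kXB_k^* = \sum_l A_lXA_l^*$. This invariance has two useful consequences. For any unitary $u\in\A$, the system $\{uA_k\}$ is again an ONB of $\A$ (left multiplication by $u$ preserves $\A$ and is Hilbert--Schmidt isometric), so $\Phi(X)=u\Phi(X)u^*$; since $\A$ is linearly spanned by its unitaries, $\Phi(X)$ commutes with all of $\A$, i.e.\ $\Phi(X)\in\A'$. Moreover, $\sum_k A_kA_k^* = \Phi(\mathbbm 1)$ is itself basis-independent.

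The crucial step --- where the homogeneously balanced hypothesis actually enters --- is the scalar identity
\begin{equation}
\sum_{k=1}^N A_kA_k^* \;=\; \frac{N}{n}\,\mathbbm 1.
\end{equation}
Writing $\A=\bigoplus_\ell M_{n_\ell}(\CC)\otimes\mathbbm 1_{m_\ell}$ and taking the basis block by block as $\tfrac{1}{\sqrt{m_\ell}}(e_{pq}^{(\ell)}\otimes\mathbbm 1_{m_\ell})$, a brief calculation yields $\sum_k A_kA_k^* = \bigoplus_\ell (n_\ell/m_\ell)\mathbbm 1$. This is a scalar on $M_n(\CC)$ precisely when the ratios $n_\ell/m_\ell$ coincide --- exactly the homogeneously balanced condition --- with common value $\lambda=N/n$ by the bookkeeping $n=\lambda\dim\A'$ and $N=\lambda^2\dim\A'$.

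To finish, I would assemble the pieces. For $Y\in\A'$, each $A_k^*\in\A$ commutes with $Y$, hence $\Phi(Y) = \bigl(\sum_k A_kA_k^*\bigr)Y = (N/n)Y$. For $X$ orthogonal to $\A'$ and any $Y\in\A'$,
\begin{equation}
\langle Y,\Phi(X)\rangle = {\rm Tr}\Bigl(\Bigl(\sum_k A_k^* Y^* A_k\Bigr) X\Bigr) = \frac{N}{n}\langle Y,X\rangle = 0,
\end{equation}
where the second equality applies the same scalar identity to the ONB $\{A_k^*\}$ of $\A$. Combined with $\Phi(X)\in\A'$, this forces $\Phi(X)=0$ on $(\A')^\perp$, so $\Phi=(N/n)E_{\A'}$, which rearranges to the stated formula. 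The main obstacle is the scalar identity of the third step: without the homogeneously balanced hypothesis the sum $\sum_k A_kA_k^*$ is only central in $\A$ rather than a multiple of $\mathbbm 1$, and the whole argument breaks down.
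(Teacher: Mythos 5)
Your argument is correct and complete, and it is worth noting that the paper itself offers no proof of this lemma at all: it simply points to \cite[Prop.\ 1]{ohnopetz}. So your self-contained derivation is a genuine addition rather than an alternative route. The chain you use --- basis-independence of $\Phi(X)=\sum_k A_kXA_k^*$ under unitary change of orthonormal basis, the resulting unitary invariance forcing $\Phi(X)\in\A'$, the scalar identity $\sum_k A_kA_k^*=\frac{N}{n}\mathbbm 1$ (which is exactly where the homogeneously balanced hypothesis enters, since a general $\A$ only yields the central element $\oplus_\ell (n_\ell/m_\ell)\mathbbm 1_{n_\ell m_\ell}$), and the two-sided check that $\Phi$ acts as $\frac{N}{n}\,{\rm id}$ on $\A'$ and as $0$ on $(\A')^\perp$ --- is the natural one, and each step checks out; in particular your use of the adjoint basis $\{A_k^*\}$ in the orthogonality computation is handled correctly.

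One point you should state explicitly rather than bury in the phrase ``rearranges to the stated formula'': what you have actually proved is $E_{\A'}(X)=\frac{n}{N}\sum_k A_kXA_k^*$, with the \emph{commutant} on the left, whereas the lemma as printed has $E_\A$. The printed version cannot be literally correct: taking $\A=M_n(\CC)$ with the orthonormal basis of matrix units gives $\frac{n}{N}\sum_{p,q}e_{pq}Xe_{qp}=\tau(X)\mathbbm 1=E_{\A'}(X)\neq E_\A(X)=X$. The subsequent proof of the trace formula also invokes the lemma precisely in the form $\sum_jA_jYA_j^*\propto E_{\A'}(Y)$, so $\A'$ is clearly what is intended. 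This is a typo in the paper rather than a gap in your argument, but as written your final sentence claims agreement with a statement that is false as printed; say explicitly that the lemma should read $E_{\A'}$.
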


\begin{theorem}
If $\A,\B\subset M_n(\CC)$ are homogeneously balanced then
$$
c(\A',\B') = \frac{n^2}{{\rm dim}(\A){\rm dim}(\B)}c(\A,\B).
$$
\end{theorem}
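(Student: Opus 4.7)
The plan is to pick Hilbert--Schmidt orthonormal bases $\{A_j\}_{j=1}^N$ of $\A$ and $\{B_k\}_{k=1}^M$ of $\B$ (with $N={\rm dim}(\A)$, $M={\rm dim}(\B)$), expand both $c(\A,\B)$ and $c(\A',\B')$ in terms of these same bases, and then read off the ratio. The previous lemma is tailor-made for this purpose: it converts a sum over an orthonormal basis of $\A$ into the trace-preserving expectation onto the commutant $\A'$.

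First I would rewrite $c(\A,\B)$ in basis form. Since $E_\A=\sum_j |A_j\rangle\langle A_j|$ as an operator on $M_n(\CC)$ with the Hilbert--Schmidt inner product (and similarly for $E_\B$), a short expansion yields $c(\A,\B)=\sum_{j,k}|\langle A_j,B_k\rangle|^2 = \sum_{j,k}|{\rm Tr}(A_j^* B_k)|^2$. Because $\A$ is $*$-closed, the family $\{A_j^*\}$ is also a Hilbert--Schmidt orthonormal basis of $\A$, so substituting it for $\{A_j\}$ gives the equivalent form
\begin{equation}
c(\A,\B)=\sum_{j,k}|{\rm Tr}(A_j B_k)|^2,
\end{equation}
which is the shape that will match the forthcoming computation of $c(\A',\B')$.

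Next I would apply the previous lemma to both $\A$ and $\B$ to write
\begin{equation}
E_{\A'}\circ E_{\B'}(X)=\frac{n^2}{NM}\sum_{j,k} A_j B_k\, X\, B_k^* A_j^*,
\end{equation}
and then evaluate the trace using the elementary fact that for any $L,R\in M_n(\CC)$ the linear operator $X\mapsto LXR$ on $M_n(\CC)$ has trace ${\rm Tr}(L)\,{\rm Tr}(R)$ (a direct check in the matrix-unit basis $\{e_{ij}\}$). Applied with $L=A_j B_k$ and $R=B_k^* A_j^*$, and using ${\rm Tr}(B_k^* A_j^*)=\overline{{\rm Tr}(A_j B_k)}$, this gives $c(\A',\B')=\frac{n^2}{NM}\sum_{j,k}|{\rm Tr}(A_j B_k)|^2$, whence the desired identity $c(\A',\B')=\frac{n^2}{{\rm dim}(\A)\,{\rm dim}(\B)}\,c(\A,\B)$ follows by comparison with the expression for $c(\A,\B)$ above.

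The whole argument turns on the previous lemma and on the elementary trace identity for $X\mapsto LXR$, so no genuine obstacle arises once those are in hand; the rest is organized bookkeeping. The spot that demands the most care is the reformulation of $c(\A,\B)$ with ${\rm Tr}(A_j B_k)$ in place of ${\rm Tr}(A_j^* B_k)$, relying only on the $*$-closedness of $\A$; without this small observation the two computed expressions would not visibly agree up to the stated constant.
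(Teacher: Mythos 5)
Your proof is correct, and it is a genuinely different organization of the argument from the paper's, although both rest on the same key lemma (the Kraus-type representation of the conditional expectation onto the commutant via an orthonormal basis of the algebra itself --- you have read that lemma in its intended form, as producing $E_{\A'}$ from a basis of $\A$, which is how the paper itself uses it). The paper picks orthonormal bases of $\A$ and of $\B'$, forms the mixed quantity $\sum_{j,k}{\rm Tr}\bigl(A_j B'_k A_j^* (B'_k)^*\bigr)$, evaluates it once via the lemma applied to $\A$ (yielding a multiple of ${\rm Tr}(E_{\A'}E_{\B'})$) and once with the roles of $\A$ and $\B'$ exchanged (yielding a multiple of ${\rm Tr}(E_\A E_\B)$), and equates the two; the only auxiliary fact needed is ${\rm Tr}(PQ)=\sum_k\|Pq_k\|^2$. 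You instead work entirely with bases of $\A$ and $\B$, compose the two Kraus representations to get $E_{\A'}E_{\B'}(X)=\frac{n^2}{NM}\sum_{j,k}A_jB_kXB_k^*A_j^*$, and take the superoperator trace term by term via ${\rm Tr}(X\mapsto LXR)={\rm Tr}(L)\,{\rm Tr}(R)$, landing directly on $\frac{n^2}{NM}\sum_{j,k}|{\rm Tr}(A_jB_k)|^2$. What your route buys is that both sides of the identity are exhibited as the \emph{same} explicit double sum (after the small but necessary observation that $\{A_j^*\}$ is again an orthonormal basis of $\A$, so $\sum_{j,k}|{\rm Tr}(A_j^*B_k)|^2=\sum_{j,k}|{\rm Tr}(A_jB_k)|^2$), which makes the constant transparent and avoids the symmetry/exchange step and any basis of $\B'$; the price is the extra elementary identity for the trace of $X\mapsto LXR$, which you correctly verify in the matrix-unit basis. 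All steps check out, including the places where the hypothesis of homogeneous balance is used (once for $\A$ and once for $\B$, exactly as the theorem provides).
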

\begin{proof}
Let  $A_1,\ldots A_N\in \A$ and $B'_1,\ldots B'_{\tilde{N}}\in \B'$ be two
ortho-normed bases, where $N:={\rm dim}(\A)$ and 
$ \tilde{N}:={\rm dim}(\B')= n^2/{\rm dim}(\B)$. Using the previous lemma 
\begin{eqnarray}
\nonumber
\sum_{j,k}{\rm Tr}(A_j B'_k A_j^* {B'_k}^*)
 &=& \frac{n}{N}
\sum_k {\rm Tr}(E_{\A'}(B'_k){B'_k}^*) 
= \frac{n}{N}\sum_k 
{\rm Tr}(E_{\A'}(E_{\A'}(B'_k){B'_k}^*))
\\
\nonumber 
&=& \frac{n}{N}\sum_k
{\rm Tr}(E_{\A'}(B'_k)E_{\A}({B'_k}^*))
= \frac{n}{N}\sum_k \| E_{\A'}(B'_k) \|_{{\rm Tr}}^2
\\
\label{eq:ccomp}
&=&\frac{n}{{\rm dim}(\A)} Tr (E_{\A'}E_{\B'})
\end{eqnarray}
where we have used the simple fact that if $P,Q$ are ortho-projections then 
${\rm Tr}(PQ)= \sum_k \|Pq_k\|^2$ where $q_1,\ldots, q_s$ is an ortho-normed basis 
in the image of $Q$. However, we could have carried out the above 
calculation in a similar way but with the role of $\A$ and $\B'$ exchanged. 
Confronting the obtained form to the one appearing in the previous equation one can 
easily obtain the claimed formula.
\end{proof}

\section{Example applications of the formula}
\label{sec:application}

The so-far presented results relied on the result of Petz and Ohno which ensured
that if $\A,\B\subset M_n(\CC)$ are quasi-orthogonal and 
${\rm dim}(\A){\rm dim}(\B) = n^2$ then also $\A'$ and $\B'$ form a quasi-orthogonal pair. 
Note that if $\A$ and $\B$ are homogeneously balanced, then this fact is a simple consequences of our formula obtained in the last section. In some sense, our formula gives a {\it quantitative generalization} of this fact.  
To make use of this quantitative information, all we need is the following observation.

\begin{lemma}
\label{appliformula}
Let $\C\subset M_n(\CC)$ be a subalgebra and $\A_1,\ldots,\A_k\subset M_n(\CC)$ be a system of quasi-orthogonal subalgebras. Then
\begin{itemize}
\item[(i)]
 ${\rm dim}(\C)\geq 1-k + \sum_{j=1}^k  c(\A_j,\C)$,  and
 \item[(ii)]
 ${\rm dim}(\C) \leq  n^2-1+ \sum_{j=1}^k  (c(\A_j,\C)-{\rm dim}(\A_j))$
\end{itemize}
with equality holding in (i) if and only if $\C\subset \A_1+\A_2+\ldots + \A_k$ (which is automatically satisfied if in particular 
$\A_1,\ldots,\A_k$ is a quasi-orthogonal system of $M_n(\CC)$).
\end{lemma}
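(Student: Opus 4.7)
The plan is to reduce both inequalities to elementary facts about orthogonal projections acting on the $(n^2-1)$-dimensional traceless subspace $V := \{\mathbbm 1\}^\perp \subset M_n(\CC)$. For any subalgebra $\A$ containing $\mathbbm 1$, set $V_\A := \A \cap V$; since $\CC\mathbbm 1 \subset \A$ is orthogonal to $V_\A$, the projection decomposes as $E_\A = E_{\CC\mathbbm 1} + E_{V_\A}$, and the cross-terms $E_{\CC\mathbbm 1}E_{V_\B}$ and $E_{V_\A}E_{\CC\mathbbm 1}$ vanish. Consequently
\begin{equation*}
c(\A,\B) = 1 + {\rm Tr}(E_{V_\A}E_{V_\B}).
\end{equation*}
Quasi-orthogonality of $\A_1,\ldots,\A_k$ is precisely the statement that $V_1,\ldots,V_k$ are pairwise orthogonal subspaces of $V$, so setting $W := V_1 \oplus \cdots \oplus V_k$ one has $E_W = \sum_j E_{V_j}$, $\dim W = \sum_j(\dim \A_j - 1)$, and
\begin{equation*}
\sum_{j=1}^k c(\A_j,\C) = k + {\rm Tr}(E_W E_{V_\C}).
\end{equation*}

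For (i), I would invoke the elementary bound ${\rm Tr}(PQ) \leq \min\{\dim N, \dim K\}$ recalled at the start of Section 4 (applied to $P = E_W$ and $Q = E_{V_\C}$), obtaining ${\rm Tr}(E_W E_{V_\C}) \leq \dim V_\C = \dim \C - 1$. Substituting into the identity above yields inequality (i). For the equality clause, note that ${\rm Tr}(E_W E_{V_\C}) = \sum_i \|E_W v_i\|^2$ for any orthonormal basis $\{v_i\}$ of $V_\C$; each summand is $\leq 1$ with equality iff $v_i \in W$, so equality in (i) is equivalent to $V_\C \subset W$, hence to $\C \subset \CC\mathbbm 1 + W = \A_1 + \cdots + \A_k$. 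When $\A_1,\ldots,\A_k$ already spans $M_n(\CC)$, the inclusion is automatic.

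For (ii), I would denote by $U$ the orthogonal complement of $W$ inside $V$, so $V = W \oplus U$ with $\dim U = n^2 - 1 - \dim W$. Since $V_\C \subset V$, the operator $E_W + E_U$ acts as the identity on $V_\C$, which lets us split
\begin{equation*}
\dim V_\C = {\rm Tr}(E_{V_\C}) = {\rm Tr}(E_{V_\C}E_W) + {\rm Tr}(E_{V_\C}E_U).
\end{equation*}
By the key identity the first summand equals $\sum_j c(\A_j,\C) - k$, while the second is bounded by $\dim U = n^2 - 1 - \sum_j(\dim \A_j - 1)$; rearranging produces (ii).

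Neither part presents a serious obstacle beyond careful bookkeeping: once the reduction to traceless parts is carried out and the standard projection-trace inequalities are invoked, both bounds follow. The only real subtlety lies in tracking the $-1$ contributions coming from the copy of $\CC\mathbbm 1$ sitting in each $\A_j$ and in $\C$; it is precisely these that produce the $1-k$ and the $n^2-1$ offsets appearing in the two inequalities.
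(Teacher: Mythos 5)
Your argument is, modulo notation, the paper's own proof: your $E_W$ is the paper's $F-E_{\CC\mathbbm 1}$, your identity $\sum_j c(\A_j,\C)=k+{\rm Tr}(E_WE_{V_\C})$ is the paper's formula for ${\rm Tr}(FE_\C)$, and both parts are then reduced to the same elementary trace bounds for products of ortho-projections. Part (i) and the equality discussion (equality iff $V_\C\subset W$, i.e.\ $\C\subset\A_1+\cdots+\A_k$) are correct and coincide with what the paper does.

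For (ii), however, the closing claim that ``rearranging produces (ii)'' does not check out. Carrying out the rearrangement from your two displayed ingredients gives
\begin{equation*}
{\rm dim}(\C)-1={\rm dim}(V_\C)\;\leq\;\Bigl(\sum_{j=1}^k c(\A_j,\C)-k\Bigr)+\Bigl(n^2-1-\sum_{j=1}^k({\rm dim}(\A_j)-1)\Bigr)=n^2-1+\sum_{j=1}^k\bigl(c(\A_j,\C)-{\rm dim}(\A_j)\bigr),
\end{equation*}
hence ${\rm dim}(\C)\leq n^2+\sum_j(c(\A_j,\C)-{\rm dim}(\A_j))$, which is weaker by $1$ than the stated inequality. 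This is not a defect of your particular route: the paper's own proof of (ii) yields exactly the same constant $n^2$, and indeed (ii) as stated is false in general --- take $k=1$, $\A_1$ a MASA and $\C=M_n(\CC)$, so that $c(\A_1,\C)={\rm dim}(\A_1)$ while ${\rm dim}(\C)=n^2>n^2-1$. So the correct conclusion of both arguments is (ii) with $n^2$ in place of $n^2-1$; this weaker form still suffices for the two applications in the last section (the relevant comparisons become $9\nleq 6$ and $9\nleq 17/2$ instead of $9\nleq 5$ and $9\nleq 15/2$). You should either prove and state the bound with constant $n^2$, or point out explicitly that the extra $-1$ cannot be extracted from this argument (and in fact cannot hold without further hypotheses on $\C$).
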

\begin{proof}
The subalgebra $\CC\mathbbm 1$ is contained in every subalgebra, so 
$E_{\A_j}-E_{\CC\mathbbm 1}$ is a projection; in fact it is the ortho-projection onto
the ``traceless part'' of $\A_j$.  Quasi-orthogonality of $\A_1,\ldots \A_k$ is then equivalent to the fact 
that the projections $(E_{\A_j}-E_{\CC\mathbbm 1})$ are mutually orthogonal for $j=1,\ldots ,k$. Moreover, in this case 
\begin{equation}
F:=E_{\CC\mathbbm 1} + \sum_{j=1}^k (E_{\A_j}-E_{\CC\mathbbm 1}) 
\end{equation}
is nothing else than the ortho-projection onto the subspace $\A_1+\ldots+\A_k$. Hence
\begin{eqnarray}
\label{eq:FE_C}
\nonumber
{\rm Tr}(FE_{\C}) &=& {\rm Tr}(E_{\CC\mathbbm 1}E_\C) + \sum_{j=1}^k
({\rm Tr}(E_{\A_j}E_{\C}) - {\rm Tr}(E_{\CC\mathbbm 1}E_\C) = 
\\
& =&
1 +  (\sum_{j=1}^k  c(\A_,\C) - 1) = 1-k + \sum_{j=1}^k  c(\A_j,\C)
\end{eqnarray}
as $E_{\CC\mathbbm 1}E_\C = E_{\CC\mathbbm 1}$  is a projection onto
a one-dimensional subspace. Thus (i) follows as
\begin{equation}
{\rm Tr}(FE_\C)\leq {\rm Tr}(E_\C) = {\rm dim}(\C)
\end{equation}
with equality holding if and only if $E_\C$ is a smaller projection than $F$; i.e.\! when 
$\C\subset (\A_1+\ldots + \A_k)$. 
The inequality (ii) follows by considering $\mathbbm 1$
as the sum of the two orthogonal projections: $\mathbbm 1 = F + (\mathbbm 1-F)$. Now
$F$ is an ortho-projection onto a $d:={\rm dim}(\sum_{j}\A_j)$ dimensional space, 
where by quasi-orthogonality
\begin{equation}
d = 1+\sum_j({\rm dim}(A_j)-1),
 \end{equation}
whereas $(\mathbbm 1-F)$ is the ortho-projection onto the orthogonal of $\A_1+\ldots+\A_k$, which is an 
$n^2-d$ dimensional subspace. Thus
\begin{equation}
{\rm dim}(\C)={\rm Tr}(E_\C) = {\rm Tr}(FE_\C)+ {\rm Tr}((\mathbbm 1-F)E_\C)
\end{equation}
where ${\rm Tr}((\mathbbm 1-F)E_\C)\leq n^2-d =n^2 -1-\sum_j({\rm dim}(A_j)-1)$. This, together
with (\ref{eq:FE_C}) expressing the term ${\rm Tr}(FE_\C)$, concludes our proof.
\end{proof}

So let us see now how we can use our formula in practice. We begin with a fairly 
simple case; we shall 
consider a quasi-orthogonal system in $M_6(\CC)$ containing $6$ maximal abelian subalgebras 
$\A_1,\ldots \A_6$ and a subalgebra $\B$ isomorphic to $M_2(\CC)$. 
The trivial necessary conditions would allow the existence of such a system. 
Of course, as was explained in the remark made after 
Corollary \ref{1factor}, by using the strong result of \cite{lmsweiner}, it is easy to show that such a system cannot exists. 
But how could we rule out its existence in a more direct manner? 
Now we cannot use  Corollary \ref{1factor}: $6$ is not a square number and more in 
particular ${\rm dim}(M_2(\CC))= 2^2\neq 6$, so $\B'$ would not remain quasi-orthogonal to 
the subalgebras $\A_1,\ldots, \A_6$. 

So assume the existence of such a system. Then by the fact that $\A_j$ is a MASA $(j=1,\ldots ,6)$, and 
by an application of our formula 
\begin{equation}
c(\A_j,\B') = c(\A_j',\B') = \frac{6^2}{6*4}c(\A_j,\B) = \frac{3}{2},
\end{equation}
since ${\rm dim}(\B)= {\rm dim}(M_2(\CC)) =4$ and $c(\A_j,\B) = 1$ by the assumed quasi-orthogonality
of $\B$ and $\A_j$.  Moreover, $c(\B,\B')=1$ because $\B$ was assumed to be a factor.
So considering the quasi-orthogonal system $\A_1,\ldots,\A_6,\B$ and the algebra $\C:=\B'\simeq M_3(\CC)$, we have 
${\rm dim}(\A_j) = 6, \, {\rm dim}(\B) = 2^2 = 4$, and 
\begin{eqnarray}
\nonumber
n^2-1 + (c(\B,\C)-{\rm dim}(\B))
+   \sum_{j=1}^6 (c(\A_j,\C)-{\rm dim}(A_j))
&= &
\\
6^2-1 \; +\;\;\;\;\;\;\;\;(1-4)\; \;\;\;\;\;\;\;\;\; +\;\;\; \;\;\;\; 6 \;*\;\;(\frac{3}{2}-6)\;\;\; \;\;\;\;\;\;\;\;&=& 5,
\end{eqnarray}
which is in conflict with (ii)  of lemma  \ref{appliformula}, as
${\rm dim}(\C)={\rm dim}(\B')=3^2 =9\nleq 5$.

This is nice, but --- as was mentioned --- it is a fairly simple case in which we have already known the nonexistence. So we shall finish by considering a somewhat more complicated example.

\begin{proposition}
There is no quasi-orthogonal system in $M_6(\CC)$ consisting of $5$ maximal abelian subalgebras  and $3$ factors isomorphic to $M_2(\CC)$.  
\end{proposition}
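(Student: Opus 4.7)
Suppose, for contradiction, that such a quasi-orthogonal system $\A_1,\ldots,\A_5,\B_1,\B_2,\B_3 \subset M_6(\CC)$ exists. The plan is to apply Lemma~\ref{appliformula} to the test subalgebra $\C := \B_1' \simeq M_3(\CC)$ (of dimension $9$) together with the full system of eight subalgebras, and to exploit a numerical coincidence by which parts~(i) and~(ii) collapse to the same linear bound, forcing simultaneous equality whose structural content is contradictory.

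First I would compute the relevant values of $c(\,\cdot\,,\B_1')$ via the trace formula of Section~\ref{sec:formula}. Since each MASA $\A_j$ equals its own commutant and is quasi-orthogonal to $\B_1$,
\begin{equation*}
c(\A_j,\B_1') \;=\; \tfrac{n^2}{\dim(\A_j)\dim(\B_1)}\,c(\A_j,\B_1) \;=\; \tfrac{36}{24}\cdot 1 \;=\; \tfrac{3}{2},
\end{equation*}
while $c(\B_1,\B_1') = 1$ as $\B_1$ is a factor. Writing $s_i := c(\B_i,\B_1')$ for $i=2,3$, Lemma~\ref{appliformula}(i) with $k=8$ would give
\begin{equation*}
9 \;=\; \dim(\B_1') \;\geq\; 1 - 8 + 5\cdot\tfrac{3}{2} + 1 + s_2 + s_3 \;=\; \tfrac{3}{2} + s_2 + s_3,
\end{equation*}
i.e.\ $s_2 + s_3 \leq \tfrac{15}{2}$. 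In parallel, part~(ii) would yield
\begin{equation*}
9 \;\leq\; (n^2-1) + 5\bigl(\tfrac{3}{2}-6\bigr) + (1-4) + (s_2-4) + (s_3-4) \;=\; \tfrac{3}{2} + s_2 + s_3,
\end{equation*}
i.e.\ $s_2 + s_3 \geq \tfrac{15}{2}$. The two bounds coincide, so equality is forced in both.

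Next I would extract the structural consequences. The equality case of~(i) is stated in the lemma to mean $\B_1' \subset S$, where $S := \A_1 + \cdots + \A_5 + \B_1 + \B_2 + \B_3$; a quick inspection of the proof of~(ii) shows that its equality is equivalent to $\mathrm{Tr}((\mathbbm 1 - F)E_{\B_1'}) = n^2 - \dim(S)$, which for two orthogonal projections is in turn equivalent to $S^\perp \subset \B_1'$. But $\dim(S) = 1 + 5\cdot 5 + 3\cdot 3 = 35$ by quasi-orthogonality, so $S^\perp$ is a nonzero one-dimensional subspace of $M_6(\CC)$. The nested inclusions $S^\perp \subset \B_1' \subset S$ then force $S^\perp \subset S\cap S^\perp = \{0\}$, the desired contradiction.

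The hard part is to notice the numerical collapse: because the traceless budget $n^2-1 = 35$ is exhausted \emph{exactly} up to a $1$-dimensional slack by five MASAs and three $M_2$-factors, the bounds in~(i) and~(ii) coincide as linear functions of $s_2,s_3$, and this matching is what distinguishes the case of three factors (impossible) from the cases of one or two factors. A secondary, purely technical point is that the equality condition for part~(ii) is not explicitly recorded in the lemma, but follows immediately from its proof.
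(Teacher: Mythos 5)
There is a genuine gap, and it sits exactly at what you call the ``numerical collapse'': that collapse is an artifact of the constant $n^2-1$ in the statement of Lemma~\ref{appliformula}(ii), which is inconsistent with that lemma's own proof. The proof of (ii) writes ${\rm dim}(\C)={\rm Tr}(FE_\C)+{\rm Tr}((\mathbbm 1-F)E_\C)$ with ${\rm Tr}(FE_\C)=1-k+\sum_j c(\A_j,\C)$ holding \emph{exactly} and ${\rm Tr}((\mathbbm 1-F)E_\C)\leq n^2-d$, $d=1+\sum_j({\rm dim}(\A_j)-1)$; adding these yields ${\rm dim}(\C)\leq n^2+\sum_j(c(\A_j,\C)-{\rm dim}(\A_j))$, i.e.\ the correct constant is $n^2$, not $n^2-1$. (The version with $n^2-1$ is false in general: take $k=1$, $\A_1$ a MASA, $\C=M_n(\CC)$.) Concretely, in your setting the exact identity is $9=\bigl(\tfrac32+s_2+s_3\bigr)+t$ with $t:={\rm Tr}(E_{S^\perp}E_{\B_1'})$, and all the lemma gives is $0\leq t\leq \dim(S^\perp)=1$, i.e.\ $\tfrac{13}{2}\leq s_2+s_3\leq\tfrac{15}{2}$: nothing is forced. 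Your two inequalities, as written, amount to $t\geq 0$ and $t\leq 0$, and the second is not available. There is also an internal inconsistency in your equality analysis: equality in (ii) \emph{as stated} corresponds to $t=n^2-\dim(S)-1=0$, which is just $\B_1'\subset S$ again (harmless), whereas the condition $S^\perp\subset\B_1'$ that you extract corresponds to $t=1$, the equality case of the \emph{corrected} bound --- you cannot have both readings at once.

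What your argument is missing is the one genuinely nontrivial ingredient of the paper's proof: a bound on $s_i=c(\B_i,\B_1')$ better than the trivial $s_i<4$. The paper proves $c(\B_j,\B_k')\leq 3$ for $j\neq k$ by taking Pauli generators $X,Y,Z$ of $\B_j$ (each with ${\rm Tr}(X^2)=6$) and noting that $E_{\B_k'}(X)$ is traceless, self-adjoint, lies in $\B_k'\simeq M_3(\CC)$ (so has doubled eigenvalues $\lambda_1,\lambda_2,\lambda_3$ with $\sum\lambda_i=0$, $\lambda_i\in[-1,1]$), whence ${\rm Tr}(E_{\B_k'}(X)^2)\leq 4$ and $c(\B_j,\B_k')\leq\frac16(6+4+4+4)=3$. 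This gives $s_2+s_3\leq 6<\tfrac{13}{2}$, contradicting the (corrected) lower bound from (ii). Your computation of $c(\A_j,\B_1')=\tfrac32$ via the trace formula, the value $c(\B_1,\B_1')=1$, and the choice $\C=\B_1'$ all match the paper; but without the estimate $c(\B_j,\B_k')\leq 3$ the argument does not close.
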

\begin{proof}
Again, note that the existence of such a system cannot be ruled out by the trivial necessary conditions. 
We assume $\A_1,\ldots,\A_5,\B_1,\B_2\B_3$ is such a system (with the ``$\A$'' algebras being the maximal 
abelian ones, and the ``$\B$'' algebras the factors isomorphic to $M_2(\CC)$).
To apply our formula, we will need to consider the commutants as well as the original algebras. 
If $\B_1,\B_2,\B_3\simeq M_2(\CC)$ and $\B_1,\B_2,\B_3\subset M_6(\CC)$, then 
their commutants $\B'_1, \B'_2, \B'_3 \simeq M_3(\CC)$.  Since $M_2(\CC)$ cannot be embedded in $M_3(\CC)$ in an identity preserving way, 
we have that $\B_j$ is not contained in $\B'_k$ and consequently that 
\begin{equation}
 c(\B_j,\B'_k) < {\rm dim}(\B_j) = 4
 \end{equation}
for every $j,k=1,2,3$. However, we shall need a better estimate. The fact is that $\B_j$ is not only not contained in $\B'_k$, but actually
we can say something about their ``minimal distance''. We shall shortly interrupt our proof with a lemma concerning this issue.
\begin{lemma}
$c(\B_j,\B'_k) \leq 3$.
\end{lemma}
\noindent
{\it Proof (of lemma).}
Let $X,Y,Z, W\in \B_j$ be an orthogonal basis such that $W=\mathbbm 1$ and
$X,Y,Z$ correspond to the Pauli-matrices in a suitable identification $\B_j\simeq M_2(\CC)$.
Let us further denote the trace-preserving expectation onto $\B'_k$ by $E$. Then 
$E(X)$ (and similarly $E(Y)$ and $E(Z)$, too) remains self-adjoint, 
so it is unitarily equivalent with a diagonal matrix. Moreover, 
as it belongs to $\B'_k\simeq M_3(\CC)$, we may actually assume it 
is unitarily equivalent with the diagonal matrix  
 ${\rm diag}(\lambda_1,\lambda_1,\lambda_2,\lambda_2,\lambda_3,\lambda_3)\in M_6(\CC)$. 
 We have that
\begin{itemize}
 \item $\lambda_1+ \lambda_2 + \lambda_3 = 0$,
 \item $\lambda_1,\lambda_2,\lambda_3 \in [-1,1]$.
 \end{itemize}
 Indeed,  the first equation follows as ${\rm Tr}(E(X))= {\rm Tr}(X)=0$, whereas the second follows from the fact 
 $\mathbbm 1 \pm E(X) = E(\mathbbm 1 \pm X)$ --- just as $\mathbbm 1 \pm X$ --- is a positive operator.
Now elementary calculus shows that in the region determined by the two equation, we have 
\begin{equation}
{\rm Tr}(E(X)^2) = 2(\lambda_1^2 + \lambda_2^2 + \lambda_3^2) \leq 4. 
\end{equation}
 On the other hand, ${\rm Tr}(X^2) = {\rm Tr}(\mathbbm 1)=6$; actually, 
 $X,Y,Z,\mathbbm 1$ is an orthogonal basis whose each member has 
 (trace)norm-square equal to $6$. Thus, using the arguments explained at 
 and after equation (\ref{eq:ccomp}),  we have that
\begin{equation}
c(\B_j,\B'_k) = 
\frac{1}{6}({\rm Tr}(E(X)^2+E(Y)^2+E(Z)^2+E(\mathbbm 1)^2))
\leq \frac{1}{6}(4 + 4+ 4+ 6) = 3 
\end{equation}
which is just what we wanted to prove.
 \quad
 \smallskip

To finish the proof, we consider the algebra $\C:=\B'_1\simeq M_3(\CC)$ and the quasi-orthogonal system 
$\A_1,\ldots , \A_5,\B_1,\B_2,\B_3$. As $\A'_k=\A_k$. By an application of our formula have that 
\begin{equation}
c(\A_k,\C) = c(\A_k,\B'_1) = \frac{6^2}{6*2^2 } c(\A_k,\B_1) = \frac{6^2}{2^2*6} = \frac{3}{2}
\end{equation}
since $c(\A_k,\B_1)=1$ by quasi-orthogonality. Now
$c(\B_1,\C) = c(\B_1,\B'_1)=1$ since
$\B$ is a factor, and finally, for $c(\B_2,\C)$ and $c(\B_3,\C)$ we can use the estimate provided 
by the lemma we have just made.
To sum it up: we have $n^2-1=6^2-1=35$,
\begin{eqnarray}
\nonumber
 \sum_{j=1}^5(c(\A_j,\C) - {\rm dim}(\A_j)) &= & 5 * (\frac{3}{2}-6) = -\frac{45}{2}, \;\;\;\; {\rm and}
\\
 \sum_{j=1}^3(c(\B_j,\C) - {\rm dim}(\B_j)) &\leq& (1-4) + (3-4)  +(3-4) = - 5
 \end{eqnarray}
 which gives $35-(45/2)-5 = 15/2 \ngeq 9  ={\rm dim}(\C)$, in contradiction with point (ii) of lemma
 \ref{appliformula}.
\end{proof}

\end{document}